\documentclass[twoside]{article}

\usepackage[preprint]{aistats2027}
\usepackage[round]{natbib}

\usepackage[utf8]{inputenc} 
\usepackage{hyperref}       
\usepackage{url}            
\usepackage{booktabs}       
\usepackage{amsfonts}       
\usepackage{nicefrac}       
\usepackage{microtype}      
\usepackage{xcolor}         
\newcommand{\added}[1]{#1}
\newcommand{\deleted}[1]{}

\usepackage{multirow}
\usepackage{amsmath, amsthm, mathrsfs}
\usepackage{caption}
\usepackage{mathtools}
\usepackage{float}
\usepackage{adjustbox}
\usepackage{graphicx}
\usepackage{bm}
\usepackage{amssymb}
\usepackage{subcaption}
\usepackage{algorithm}
\usepackage{algorithmic}
\usepackage{comment}

\setcitestyle{authoryear,open={(},close={)}}

\def\be{\begin{equation}}
\def\ee{\end{equation}}
\def\bea{\begin{eqnarray}}
\def\eea{\end{eqnarray}}
\def\bd{\begin{displaymath}}
\def\ed{\end{displaymath}}
\def\bda{\begin{eqnarray*}}
\def\eda{\end{eqnarray*}}
\def\bsm{\begin{small}}
\def\esm{\end{small}}

\def\bZ{\mathbf{Z}}

\def\bg{\begin{figure}[tbph]\begin{center}}
\def\eg{\end{center}\end{figure}}

\def\bc{\begin{center}}
\def\ec{\end{center}}
\def\bremark{\begin{remark}}
\def\eremark{\end{remark}}

\def\bV{\mathbf{V}}

\def\bW{\mathbf{W}}

\def\1{\mathbf{1}}

\def\bH{\mathbf{H}}

\def\bS{\mathbf{S}}

\def\bX{\mathbf{X}}

\def\bZ{\mathbf{Z}}

\def\vec{\mbox{vec}}
\def\bg{\mathbf{g}}

\def\bA{{\bf A}}

\def\ha1{\hat \beta_1}

\def\bnt{\begin{enumerate}}
\def\ent{\end{enumerate}}

\def\bsc{\begin{scriptsize}}
\def\esc{\end{scriptsize}}

\newtheorem{remark}{Remark}
\newtheorem{tm}{Theorem}

\def\CK{{\cal K}}

\def\by{\mathbf{y}}

\def\KK1{\CK_{h_1,\vec\lambda_1}}

\def\bX{{\bf X}}

\def\ss2{\sum_{x^u\in {\cal X}^u}\sum_{y^u\in {\cal D}_{x^u}^1}  \beta_{\lambda_2}(x^u, y^u)}

\makeatletter
\newcommand{\figcaption}{\def\@captype{figure}\caption}
\newcommand{\tabcaption}{\def\@captype{table}\caption}
\makeatother

\begin{document}

\raggedbottom

\twocolumn[
\aistatstitle{Subdata Selection for Measurement-Constrained Linear Regression under Model Misspecification}

\aistatsauthor{Yundi Kong \And Min Yang \And Ping-Shou Zhong}

\aistatsaddress{
Department of Mathematics, Statistics, and Computer Science\\
University of Illinois at Chicago, Chicago, IL 60607\\
\texttt{\{ykong21, myang2, pszhong\}@uic.edu}
} ]
\begin{abstract}
In many modern applications, obtaining response variables is costly, time-consuming, or technically challenging, making it impractical to measure outcomes for all available units. In such settings, covariates may be available for a large pool of candidates, but responses can only be collected for a small, carefully selected subset, making the choice of which units to measure a consequential decision. Most existing subsampling methods are developed for massive datasets, emphasizing the trade-off between statistical accuracy and computational cost, and typically relying on the assumption of a correctly specified model. These features limit their suitability in measurement-constrained settings, where statistical efficiency is paramount and model misspecification is often unavoidable. To address these challenges, we develop a new subset selection framework for linear regression that accommodates potential model misspecification. The proposed approach targets predictive expected loss and yields an $A$-optimality–type criterion that depends only on covariate information, requiring no response observations at the selection stage. We refer to this method as Expected Loss Optimal Selection (ELOS). By focusing on predictive risk rather than model-based optimality, ELOS remains well-defined and robust under misspecification. Simulation studies and real data applications demonstrate that ELOS achieves strong predictive performance and compares favorably with existing methods across a range of scenarios.
\end{abstract}

\section{Introduction}
In many modern applications, large collections of candidate data points are available, but obtaining the corresponding response is costly, time-consuming, or technically difficult. For example, materials science experiments may involve many possible experimental settings, while only a limited number of synthesis trials can be conducted \citep{Nakamura2017}. In computer engineering, benchmark evaluations of processor performance are expensive \citep{Ein_Dor1987}, and environmental monitoring may permit measurements at only a limited number of locations due to sensor or infrastructure constraints \citep{ChenVarma2015}. These examples motivate measurement-constrained regression, where covariates are observed for a large candidate pool, but responses can be measured only for a small selected subset. The central challenge is therefore to determine which observations should be measured to achieve the best statistical efficiency. This setting is related to experimental design, but differs from classical design problems in which experimental conditions can often be specified freely. Here, the design points are fixed in advance, and one must select from a potentially massive pool of candidate covariate vectors. Moreover, the selection rule must be constructed before observing the responses: the full design matrix is available at the selection stage, responses are then collected only for the selected observations. This response-unavailability constraint distinguishes the problem from much of the existing subset selection and subsampling literature.

Most existing subsampling methods are developed for massive data analysis, where the full response vector is assumed to be available and subsampling is used mainly to balance the trade-off between statistical accuracy and computational cost. Their selection criteria are often derived from Fisher information, asymptotic variance, or related model-based design principles. Recent reviews summarize the rapid development of this literature \citep{YaoWang2021Review,YuAiYe2024Review}. For linear regression, \citet{Ma2015} studied algorithmic leveraging based on statistical leverage scores. For generalized linear models and classification problems, related methods include local case-control sampling for imbalanced logistic regression \citep{FithianHastie2014LCC} and optimal subsampling for large-sample logistic regression \citep{Wang2018}. These ideas have been extended to quantile regression \citep{WangMa2021Quantile,AiWangYuZhang2021Quantile}, general big data regression procedures \citep{AiYuZhangWang2021General}, distributed quasi-likelihood estimation \citep{YuWangAiZhang2022Distributed}, and feature screening for massive data analysis \citep{ZhuPanWuWang2022Screening}. Deterministic methods such as Information-Based Optimal Subdata Selection (IBOSS) \citep{wang2019information} select covariate extremes to maximize the determinant of the information matrix, providing a scalable design-based strategy for linear regression. Further developments include orthogonal subsampling for linear models \citep{WangElmstedtWongXu2021Orthogonal} and information-based optimal subdata selection for logistic regression \citep{ChengWangYang2020IBOSSLogistic}.   Although many of these methods use only covariates during selection and can therefore be adapted to measurement-constrained regression, relatively little work has explicitly studied subset selection under response measurement constraints. To the best of our knowledge, \cite{WangYuSingh2017} and \cite{Meng2021LowCon} are among the few studies that directly address this setting, proposing a subset selection strategy mainly for linear models, with possible extensions to generalized linear models. 


However, most of the methods discussed above are developed under correct model specification, which can be restrictive in practice. Real-world data may contain omitted variables, nonlinear structure, heteroscedasticity, or other departures from the working model. Under misspecification, criteria focused only on parameter estimation under the nominal model may no longer align with predictive accuracy. This motivates selection criteria that are robust to misspecification, use only covariates at the selection stage, and maintain strong prediction performance. LowCon \citep{Meng2021LowCon} addresses this issue from a design-based perspective by selecting data points that approximate a low-correlation design under possible linear-model misspecification.
Another natural direction is to move from parameter-estimation criteria toward loss-based, prediction-oriented objectives. Coreset methods follow this principle by approximating the empirical loss function of the full dataset rather than optimizing parameter estimation directly. For least-squares regression, \cite{Boutsidis2013} showed that a weighted subset can achieve a $(1\pm\varepsilon)$ relative-error approximation to the full least-squares objective. Extensions include $\ell_p$ regression \citep{Woodruff2024} and logistic regression \citep{Munteanu2018}. These results highlight the appeal of loss-based subset selection when prediction is the primary goal. However, standard coreset constructions typically require access to responses when forming the subset, limiting their applicability when responses are unavailable at the selection stage.

Motivated by these observations, we develop a prediction-oriented framework for measurement-constrained subdata selection. We derive the corresponding objective under correctly specified and misspecified linear models and show that minimizing the resulting expected loss leads to an A-optimality--type criterion. 
\added{The results build an explicit connection between classical prediction-oriented optimal design and measurement-constrained subdata selection, with Theorem~3, optimality of the classical prediction criterion under model misspecification, providing the new theoretical component that has no counterpart in the classical theory.}
This yields a principled alternative to Fisher-information-based designs without requiring response information at the selection stage. We refer to the proposed criterion as Expected Loss Optimal Selection (ELOS). Empirical results show that, compared with simple random sampling (SRS), leverage-score sampling, IBOSS, LowCon and coreset methods, ELOS achieves improved or comparable predictive performance with competitive computational efficiency across a range of settings. This paper is organized as follows. Section \ref{sec:2} introduces notation and reviews related work on subdata selection. Section \ref{Section Proposed} presents the main theoretical results. Section \ref{Num_Exp} evaluates the proposed method through simulation studies and real data examples. Section \ref{discussion} concludes with a  discussion and future work. Proofs and additional experiments are provided in the Appendix.

\section{Notations and related work in subdata selection}
\label{sec:2}

\subsection{Problem setup and notation}

Consider a linear regression model with covariate matrix $\bX \in \mathbb{R}^{n \times d}$
and response vector $\bm{y} \in \mathbb{R}^n$,
\[
\bm{y} = \bX\bm{\beta} + \bm{\varepsilon}, \ 
\mathbb{E}(\bm{\varepsilon})=\bm{0} \text{ and } V(\bm{\varepsilon})=\sigma^2 {\bm I}_n,
\]
where $\bm{x}_i^\top$ denotes the $i$th row of $\bX$ and $y_i$ is the corresponding response. The ordinary least squares (OLS) estimator minimizes the full-data squared loss
\[
\mathcal{L}(\bm{\beta}) = \| \bm{y} - \bX\bm{\beta} \|^2 
= \sum_{i=1}^n (y_i - \bm{x}_i^\top \bm{\beta})^2.
\]
Assuming $\bX$ has full column rank, the minimizer has the closed-form expression
\[
\hat{\bm{\beta}}_{\text{OLS}} 
= \arg\min_{\bm{\beta} \in \mathbb{R}^d} \mathcal{L}(\bm{\beta})
= (\bX^\top \bX)^{-1} \bX^\top \bm{y}.
\]
In the context of subdata selection, estimation is often performed using only a subset. Such procedures can be naturally formulated as a weighted least square (WLS) problem, in which selected observations contribute to estimation through assigned weights and each observation receives a nonnegative weight. 
Let the diagonal weight matrix $\bW = \operatorname{diag}(w_1, \ldots, w_n) \in \mathbb{R}^{n \times n}$. The weighted loss is
\[
\mathcal{L}_\bW(\bm{\beta}) 
= (\bm{y} - \bX\bm{\beta})^\top \bW (\bm{y} - \bX\bm{\beta}) 
= \sum_{i=1}^n w_i (y_i - \bm{x}_i^\top \bm{\beta})^2,
\]
with estimator
\(
\hat{\bm{\beta}}_{\text{WLS}} = (\bX^\top \bW \bX)^{-1} \bX^\top \bW \bm{y}.
\)

For later analysis, it is convenient to define the projection matrix
\(
H_W = \bX (\bX^\top \bW \bX)^{-1} \bX^\top \bW,
\)
so that the fitted values are $\hat{\bm{y}} = H_{\bW} \bm{y}$ and the residual sum of squares can be written as
\(
\mathcal{L}(\bW) = \| (I - H_{\bW}) \bm{y} \|^2.
\)
\subsection{Optimal sampling selection}
Optimal sampling selection seeks a small subset that yields statistically efficient estimation of $\bm{\beta}$ by optimizing criteria related to the variance-covariance matrix of the estimator. Existing approaches can be broadly categorized as probabilistic or deterministic.

Probabilistic subsampling methods assign each observation an inclusion probability \(\pi_i>0\) and draw a subset of size \(r\) accordingly. 
Let \(\delta_i\in\{0,1\}\) indicate whether the \(i\)th observation is included, and define
\(
\bW = \operatorname{diag}\!\left(\frac{\delta_i}{\pi_i}\right),
\)
the resulting weighted least squares estimator for linear regression is
\(
\hat{\bm{\beta}}_{\pi} = (\bX^\top \bW \bX)^{-1} \bX^\top \bW \bm{y}.
\)
The inclusion probabilities determine the statistical efficiency of \(\hat{\bm{\beta}}_{\pi}\), and are typically chosen to minimize an asymptotic mean squared error or related criterion under the working model \citep{Wang2018,AiYuZhangWang2021General}. 
For linear regression, a classical example is leverage-based sampling \citep{drineas2006sampling}, where
\(
\pi_i \propto h_{ii}, h_{ii}=\bm{x}_i^\top (\bX^\top \bX)^{-1}\bm{x}_i .
\)
Leverage scores measure the influence of observations on the geometry of the design matrix. Their statistical properties in linear regression, including bias and variance, were studied by \citet{Ma2015}. More generally, for logistic regression and generalized linear models, optimal probabilities can be derived from large-sample criterion-based approximations and need not be proportional to leverage scores alone \citep{Wang2018,AiYuZhangWang2021General}. 

Deterministic methods select observations without random sampling. Let \( {\bf{\Delta}} = \operatorname{diag}(\delta_1,\ldots,\delta_n) \), where \( \delta_i\in\{0,1\} \), denote the selection matrix. The corresponding estimator is
\[
\tilde{\bm{\beta}}_{\bf \Delta} = (\bX^\top {\bf\Delta} \bX)^{-1} \bX^\top {\bf\Delta} \bm{y}, 
\quad
\operatorname{Var}(\tilde{\bm{\beta}}_{\bf \Delta} )=\sigma^2(\bX^\top {\bf \Delta} \bX)^{-1}.
\]
Classical optimal design criteria include
\[
\begin{aligned}
\text{$D$-optimality: } 
&\max_{{\bf \Delta}} \det(\bX^\top {\bm \Delta} \bX), \\
\text{$A$-optimality: } 
&\min_{{\bf \Delta}} \operatorname{tr}\{(\bX^\top {\bf \Delta} \bX)^{-1}\},
\end{aligned}
\]
subject to $\sum_i \delta_i = r$. 
These criteria maximize Fisher information and improve the precision of \(\tilde{\bm{\beta}}_{\bf \Delta}\) under correct model specification.
The information-based optimal subdata selection (IBOSS) method \citep{wang2019information} is a representative deterministic approach. It constructs a subset using only the covariate matrix $X$ to approximate a D-optimal design, achieving near full-data efficiency under the linear model even when $r/n$ is small \citep{wang2019information}.

\subsection{Coreset method}

In contrast to model-based optimal subsampling, coreset methods adopt a geometric and loss-approximation perspective. 
For linear regression, a coreset is a small weighted subset designed to preserve the full-data least-squares loss. 
Let $\bS \in \mathbb{R}^{r \times n}$, with $r \ll n$, be a sampling--rescaling matrix that encodes both row selection and weight adjustment. 
The coreset loss is
\[
\mathcal{L}_\bS(\bm{\beta}) 
= \| \bS\bm{y} - \bS\bX\bm{\beta} \|_2^2
= (\bm{y} - \bX\bm{\beta})^\top \bS^\top \bS(\bm{y} - \bX\bm{\beta}),
\]
with estimator
\(
\hat{\bm{\beta}}_{\text{coreset}} 
= (\bX^\top \bS^\top \bS \bX)^{-1} \bX^\top \bS^\top \bS\bm{y}.
\)
To connect coreset methods with other subdata selection procedures, we write
\(
\bW = \bS^\top \bS = \operatorname{diag}(w_1,\ldots,w_n),
\)
where $w_i=0$ for unselected observations and $w_i>0$ for selected observations. 
Then the coreset loss can be equivalently expressed as
\[
\mathcal{L}_\bS(\bm{\beta})
=
(\bm{y}-\bX\bm{\beta})^\top \bW(\bm{y}- \bX\bm{\beta})
=
\sum_{i=1}^n w_i (y_i-\bm{x}_i^\top\bm{\beta})^2,
\]
and the estimator becomes
\[
\hat{\bm{\beta}}_{\text{coreset}}
=
(\bX^\top \bW \bX)^{-1}\bX^\top \bW\bm{y}.
\]
Thus, the coreset estimator is a weighted least squares estimator indexed by a sparse weight matrix $W$, a representation that will also be used to describe other subset-based estimators.

Coreset quality is typically measured by how well the weighted loss approximates the full-data loss. 
Let $\hat{\bm{\beta}}_{\text{OLS}}$ and $\hat{\bm{\beta}}_{\text{coreset}}$ denote the minimizers of the full-data and coreset losses, respectively. 
For least-squares regression, a coreset of size $O(d/\varepsilon^2)$ can guarantee \citep{Boutsidis2013}:
\[
\| \bm{y} - \bX\hat{\bm{\beta}}_{\text{OLS}} \|_2^2
\le
\| \bm{y} - \bX\hat{\bm{\beta}}_{\text{coreset}} \|_2^2
\le
(1+\varepsilon)
\| \bm{y} - \bX\hat{\bm{\beta}}_{\text{OLS}} \|_2^2
.
\]
This prediction-oriented guarantee makes coreset methods appealing under model misspecification, since it is formulated in terms of loss approximation rather than Fisher-information-based parameter efficiency and it doesn't rely on probabilistic assumptions about the data-generating process. 
Their performance depends on the sampling--rescaling strategy and the subdata size $r$, which together determine how well the weighted subset approximates the geometry of the full data \citep{Drineas_JMLR_2012,Boutsidis2013}. However, the strongest coreset guarantees and many practically effective coreset constructions for least-squares regression are response-aware, using $\bm y$ when forming the subset or weights. 
For example, the near-optimal deterministic constrained-regression construction of \citet{Boutsidis2013} uses the augmented matrix $[\bX,\bm y]$ and is therefore response-dependent.
Although \citet{Boutsidis2013} established the existence of a response-agnostic coreset construction and provided theoretical justification for its use in measurement-constrained settings, its
worst-case approximation ratio is $O(n/r)$, substantially weaker than
the near-optimal guarantee of the response-aware construction, and the
paper does not provide an implementation algorithm.

\subsection{Connection to optimal subsampling and coreset methods}
Optimal sampling selection and coreset methods offer complementary perspectives for subdata selection.
Under measurement constraints, the primary goal is to select informative observations before responses are measured; at the same time, scalability remains important when the candidate pool is large.
Coreset methods aim to approximate the least-squares loss to preserve predictive accuracy, whereas optimal subsampling methods focus on variance–covariance–based criteria to enhance parameter estimation efficiency. The former are inherently prediction-oriented and tend to be more robust to model misspecification, while the latter can achieve high statistical efficiency under correct model specification. Motivated by these complementary strengths, we evaluate subdata selection through the expected prediction loss of the estimator induced by a sparse weight matrix $W$.
Formally, we define
\[
\mathbb{E}[\mathcal{L}(\bW)] 
= 
\mathbb{E}\big[\| \bm{y} - H_\bW \bm{y} \|^2\big],
\]
where $H_\bW = \bX (\bX^\top \bW \bX)^{-1} \bX^\top \bW$ is the projection matrix associated with the weighted or subset-based estimator defined by $\bW$.
This criterion bridges the loss-approximation perspective of coreset methods and the statistical-efficiency perspective of optimal subsampling, while directly targeting predictive performance after response measurement. It captures both variance and bias components and provides a principled foundation for subdata selection under measurement constraints and potential model misspecification. The next section develops this framework in detail.
\section{Main results}
\label{Section Proposed}
This section develops the theoretical foundation of ELOS. We consider deterministic subset selection using only covariate information, prior to response measurement. Through a loss-based formulation, we derive the expected prediction error under both correct and misspecified linear models, and show that minimizing this loss leads to an A-optimality-type criterion for subset selection. \added{Throughout Theorems~\ref{thm1}--\ref{thm:3}, the design matrix is treated as fixed and the errors are homoscedastic.}

\subsection{Under correctly specified linear model}
\label{model_correct}
We first consider the correctly specified linear model
\begin{equation}
\label{eq:well_specified_model}
\bm{y} = \bX\bm{\beta} + \bm{\varepsilon}, 
\; \mathbb{E}(\bm{\varepsilon})=\bm{0} \text{ and } \operatorname{Var}(\bm{\varepsilon})=\sigma^2 {\bm I}_n,
\end{equation}
where \(\bX \in \mathbb{R}^{n \times d}\), \(\bm{y} \in \mathbb{R}^n\), and \(\bm{\beta} \in \mathbb{R}^d\) is the true coefficient vector.

Following the deterministic subset-selection setup, let
\[
{\bf \Delta} = \operatorname{diag}(\delta_1,\ldots,\delta_n), 
\qquad \delta_i \in \{0,1\}, 
\qquad \sum_{i=1}^n \delta_i = r,
\]
denote the selection matrix. The subdata-based estimator is
\begin{equation}
\label{eq:subdata_estimator_delta}
\tilde{\bm{\beta}}_{\bf \Delta}
=
(\bX^\top {\bf \Delta} \bX)^{-1}\bX^\top {\bf \Delta} \bm{y}.
\end{equation}
The associated projection matrix is
\begin{equation}
\label{eq:projection_delta}
\bH_{\bf \Delta}
=
\bX(\bX^\top {\bf \Delta} \bX)^{-1}\bX^\top {\bf \Delta},
\end{equation}
and the full-design prediction loss is
\begin{equation}
\label{eq:prediction_loss_delta}
\mathcal{L}({\bf \Delta})
=
\|\bm{y}-\bH_{\bf \Delta}\bm{y}\|^2
=
\|({\bf I}-\bH_{\bf \Delta})\bm{y}\|^2.
\end{equation}

The following result characterizes the expected loss under Model~\eqref{eq:well_specified_model} and shows that minimizing it is equivalent to an A-optimal subset selection criterion; see Appendix~\ref{app:proofs-main} for the proof.
\begin{tm}
\label{thm1}
Assume the linear model \( \bm{y} = \bX\bm{\beta} + \bm{\varepsilon} \), where $\mathbb{E}(\bm{\varepsilon})=\bm{0} \text{ and } \operatorname{Var}(\bm{\varepsilon})=\sigma^2 {\bm I}_n,$
and let the sampling matrix \( {\bf \Delta} = \operatorname{diag}(\delta_1, \ldots, \delta_n) \in \mathbb{R}^{n \times n} \) satisfy \( \delta_i \in \{0,1\} \), $\sum_{i=1}^n\delta_i=r$ and \( \operatorname{rank}({\bf \Delta} \bX) = d \). 
Then the expected loss is given by
\begin{equation}
\begin{aligned}
\mathbb{E}[\mathcal{L}({\bf \Delta})]
&= \sigma^2 \left[ n - 2d + \Psi_A({\bf \Delta}) \right].
\end{aligned}
\end{equation}
where \(
\Psi_A({\bf \Delta})
:= \operatorname{tr}\!\left[
(\bm{U}^\top {\bf \Delta} \bm{U})^{-1}
\right] \) with the compact singular value decomposition \( \bX = \bm{U} \bm{\Sigma} \bV^\top \). 
\end{tm}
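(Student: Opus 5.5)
The plan is to compute $\mathbb{E}[\mathcal{L}({\bf \Delta})]$ directly through the idempotent structure of $\bH_{\bf \Delta}$, and then rewrite the resulting trace in terms of $\bm U$.

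\emph{Step 1: the mean term vanishes.} Write $\bm{y} = \bX\bm\beta + \bm\varepsilon$. Since $\bH_{\bf \Delta}\bX = \bX(\bX^\top{\bf \Delta}\bX)^{-1}\bX^\top{\bf \Delta}\bX = \bX$, we have
\[
({\bf I}-\bH_{\bf \Delta})\bm{y} = ({\bf I}-\bH_{\bf \Delta})\bm\varepsilon .
\]
The invertibility of $\bX^\top{\bf \Delta}\bX$ is guaranteed by $\operatorname{rank}({\bf \Delta}\bX)=d$. Hence
\[
\mathbb{E}[\mathcal{L}({\bf \Delta})]=\sigma^2\operatorname{tr}\{({\bf I}-\bH_{\bf \Delta})^\top({\bf I}-\bH_{\bf \Delta})\},
\]
using $\mathbb{E}[\bm\varepsilon^\top A\bm\varepsilon]=\sigma^2\operatorname{tr}(A)$.

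\emph{Step 2: expand the trace.} We have
\[
\operatorname{tr}\{({\bf I}-\bH)^\top({\bf I}-\bH)\} = n - 2\operatorname{tr}(\bH) + \operatorname{tr}(\bH^\top\bH).
\]
By cyclicity, $\operatorname{tr}(\bH_{\bf \Delta}) = \operatorname{tr}\{(\bX^\top{\bf \Delta}\bX)^{-1}\bX^\top{\bf \Delta}\bX\}=d$, which gives the $-2d$ term. For the last term, $\bH^\top\bH = {\bf \Delta}\bX(\bX^\top{\bf \Delta}\bX)^{-1}\bX^\top\bX(\bX^\top{\bf \Delta}\bX)^{-1}\bX^\top{\bf \Delta}$. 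Cycling the trace and using ${\bf \Delta}^2={\bf \Delta}$ (since $\delta_i\in\{0,1\}$) gives
\[
\operatorname{tr}(\bH^\top\bH) = \operatorname{tr}\{(\bX^\top{\bf \Delta}\bX)^{-1}\bX^\top\bX\}.
\]

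\emph{Step 3: pass to the SVD.} Substitute $\bX=\bm U\bm\Sigma\bV^\top$, where $\bm\Sigma$ and $\bV$ are $d\times d$ and invertible (full column rank), and $\bm U^\top\bm U={\bf I}_d$. Then $\bX^\top\bX=\bV\bm\Sigma^2\bV^\top$ and
\[
(\bX^\top{\bf \Delta}\bX)^{-1}=\bV\bm\Sigma^{-1}(\bm U^\top{\bf \Delta}\bm U)^{-1}\bm\Sigma^{-1}\bV^\top .
\]
Multiplying these, the $\bV$ factors and the $\bm\Sigma$ factors cancel under the trace, leaving $\operatorname{tr}\{(\bm U^\top{\bf \Delta}\bm U)^{-1}\}=\Psi_A({\bf \Delta})$. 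Combining the three steps gives $\sigma^2[n-2d+\Psi_A({\bf \Delta})]$.

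The only delicate point is the bookkeeping in Step 2. $\bH_{\bf \Delta}$ is idempotent but not symmetric, so one must not replace $\bH^\top\bH$ by $\bH$. The use of ${\bf \Delta}^2={\bf \Delta}$ is exactly where the 0/1 selection, rather than general weights, enters the argument.
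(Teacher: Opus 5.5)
Your proposal is correct and takes essentially the same route as the paper's proof. Both use the same trace expansion $n-2\operatorname{tr}(\bH_{\bf \Delta})+\operatorname{tr}(\bH_{\bf \Delta}^\top\bH_{\bf \Delta})$, the same appeal to ${\bf \Delta}^2={\bf \Delta}$, and the same SVD cancellation; the only cosmetic difference is that you remove the mean term up front via $\bH_{\bf \Delta}\bX=\bX$, whereas the paper expands into bias, variance and cross terms and then observes that the bias and cross terms vanish.
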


\begin{remark}
The criterion
\(
\Psi_A({\bf \Delta})
=
\operatorname{tr}\{(\bm U^\top{\bf \Delta}\bm U)^{-1}\}
\)
has the classical $A$-optimal trace-inverse form in the left singular-vector representation. Equivalently,
\[
\Psi_A({\bf \Delta})
=
\operatorname{tr}\!\left[
\bX^\top\bX
(\bX^\top{\bf \Delta}\bX)^{-1}
\right],
\]
which is the finite-pool $V$-optimality criterion for prediction over the observed candidate pool. The distinction from classical exact design is that selection here is restricted to the fixed pool without replication, with
\(\delta_i\in\{0,1\}\) and \(\sum_i\delta_i=r\).
\end{remark}

\subsection{Under mis-specified linear model}
\label{model_mis}
We next consider the case where the data-generating process deviates from the assumed linear model:
\begin{equation}
\label{mis_model}
\bm{y} = \bX\bm{\beta} + f + \bm{\varepsilon},
\; \mathbb{E}(\bm{\varepsilon})=\bm{0} \text{ and } \operatorname{Var}(\bm{\varepsilon})=\sigma^2 {\bm I}_n,
\end{equation}
where \(f \in \mathbb{R}^n\) represents an unknown misspecification term. 
Since \(f\) is unavailable before selection, the subset is still chosen using only the covariates. 
Therefore, the estimator and projection matrix remain those in \eqref{eq:subdata_estimator_delta} and \eqref{eq:projection_delta}, respectively, with performance evaluated by the prediction loss in \eqref{eq:prediction_loss_delta}.

Let the compact singular value decomposition of \( \bX \in \mathbb{R}^{n \times d} \) be \( \bX = \bm{U} {\bf \Sigma} \bV^\top \), where \( \bm{U} \in \mathbb{R}^{n \times d} \). Extend \( \bm{U} \) to an orthonormal basis \( [\bm{U}\ \tilde{\bm{U}}] \in \mathbb{R}^{n \times n} \) such that \( \tilde{\bm{U}}^\top \bX = 0 \). While $f\in \mathbb{R}^n $ is unknown, because of the model format (\ref{mis_model}), we can always assume $f$ satisfy the following properties: 
\begin{itemize}
    \item[(i)] \( \bX^\top f = 0 \).
    \item[(ii)] \( \|f\|^2 / n = \tau^2 \text{ for some } \tau <\infty \).
    \item[(iii)] \( f = \tau \sqrt{n} \tilde{{\bm U}} \bm{a} \), where $\bm{a}$ is a direction vector of the unit sphere in \( \mathbb{R}^{n-d} \).
\end{itemize}
\begin{remark}
Any \( f \in \mathbb{R}^n \) can be decomposed as \( f = {\bm P_X} f + ({\bm I} - {\bm P_X})f \), where \( {\bm P_X}=\bX(\bX^\top \bX)^{-}\bX^\top \). The first term ${\bm P_X} f$ can be absorbed into the linear term, i.e., $\bX(\bm{\beta}+(\bX^\top \bX)^{-}\bX^\top f)$. Thus assumption (i) can be made without loss of generality.
\end{remark}
\begin{remark}
Assumption (ii) assumes the average squared magnitude of misspecification is bounded. This ensures \( f \) does not overwhelm the signal. It is practically reasonable. 
\end{remark}
\begin{remark}
Because of (i), \( f \in \operatorname{col}(\tilde{{\bm U}}) \) and it can be written as \( f = \tilde{{\bm U}} \bm{b} \) for some \( \bm{b} \in \mathbb{R}^{n-d} \). The boundedness condition \( \|f\|^2 / n = \tau^2 \) implies \( \|\bm{b}\| = \tau \sqrt{n} \). We represent \( \bm{b} \) as a scaled vector contained on the unit sphere in \( \mathbb{R}^{n-d} \) with fixed norm \( \tau \sqrt{n} \), i.e., 
    \(
    \bm{b} = \tau \sqrt{n} \bm{a}.
    \)

\end{remark}
The following theorem gives the expected loss under Model~\ref{mis_model}; see Appendix~\ref{app:proofs-main} for the proof.

\begin{tm}\label{thm:2}
Under Model \ref{mis_model}, suppose the sampling matrix \( {\bm \Delta} = \operatorname{diag}(\delta_1, \ldots, \delta_n) \) with \( \delta_i \in \{0,1\} \) satisfies $\sum_{i=1}^n\delta_i=r$ and \( \operatorname{rank}({\bm \Delta} \bX) = d \). 
Then the expected loss satisfies
\begin{equation}\label{eq:expected_loss}
\begin{aligned}
\mathbb{E}[\mathcal{L}({\bm \Delta})]
&= \sigma^2\left(n - 2d + \Psi_A({\bm \Delta})\right) \\
&\quad + \tau^2 n \cdot
\bm{a}^\top \tilde{\bm U}^\top
({\bm I} - \bH_{\bm \Delta})^\top 
({\bm I} - \bH_{\bm \Delta})
\tilde{\bm U}\bm{a}.
\end{aligned}
\end{equation}
where 
\(
\Psi_A({\bm \Delta}) := \operatorname{tr}\!\left[({\bm U}^\top {\bm \Delta} {\bm U})^{-1}\right].
\)
\end{tm}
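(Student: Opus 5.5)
The plan is to split the residual $(\mathbf I-\bH_{\bf \Delta})\bm y$ into a deterministic part and a zero-mean random part, expand the squared norm, and take expectations, reusing Theorem~\ref{thm1} for the noise contribution. The first observation is that $\bH_{\bf \Delta}\bX = \bX(\bX^\top{\bf \Delta}\bX)^{-1}\bX^\top{\bf \Delta}\bX = \bX$, which is well defined because $\operatorname{rank}({\bf \Delta}\bX)=d$ makes $\bX^\top{\bf \Delta}\bX$ invertible. Hence $(\mathbf I-\bH_{\bf \Delta})\bX\bm\beta = \bm 0$, and under Model~\eqref{mis_model}
\[
(\mathbf I-\bH_{\bf \Delta})\bm y = (\mathbf I-\bH_{\bf \Delta})f + (\mathbf I-\bH_{\bf \Delta})\bm\varepsilon .
\]

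Next I will expand $\mathcal L({\bf \Delta})$ into three terms and take expectations term by term.
\begin{itemize}
\item[(a)] The cross term $2 f^\top(\mathbf I-\bH_{\bf \Delta})^\top(\mathbf I-\bH_{\bf \Delta})\bm\varepsilon$ has zero expectation, since $f$ and $\bH_{\bf \Delta}$ are deterministic (fixed design, selection based on $\bX$ only) and $\mathbb E(\bm\varepsilon)=\bm 0$.
\item[(b)] The noise term $\mathbb E\|(\mathbf I-\bH_{\bf \Delta})\bm\varepsilon\|^2 = \sigma^2\operatorname{tr}[(\mathbf I-\bH_{\bf \Delta})^\top(\mathbf I-\bH_{\bf \Delta})]$ is exactly the quantity computed in Theorem~\ref{thm1}. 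There, the signal $\bX\bm\beta$ is annihilated in the same way, so it equals $\sigma^2(n-2d+\Psi_A({\bf \Delta}))$.
\item[(c)] The misspecification term is deterministic and equals $\|(\mathbf I-\bH_{\bf \Delta})f\|^2$. Substituting $f=\tau\sqrt n\,\tilde{\bm U}\bm a$ from property (iii) gives $\tau^2 n\,\bm a^\top\tilde{\bm U}^\top(\mathbf I-\bH_{\bf \Delta})^\top(\mathbf I-\bH_{\bf \Delta})\tilde{\bm U}\bm a$, which is the last term of \eqref{eq:expected_loss}.
\end{itemize}

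If I rederive (b) instead of citing Theorem~\ref{thm1}, I will expand
\[
\operatorname{tr}[(\mathbf I-\bH_{\bf \Delta})^\top(\mathbf I-\bH_{\bf \Delta})] = n - 2\operatorname{tr}(\bH_{\bf \Delta}) + \operatorname{tr}(\bH_{\bf \Delta}^\top\bH_{\bf \Delta}).
\]
\begin{itemize}
\item[(1)] $\operatorname{tr}(\bH_{\bf \Delta}) = \operatorname{tr}\{(\bX^\top{\bf \Delta}\bX)^{-1}\bX^\top{\bf \Delta}\bX\} = d$.
\item[(2)] Using ${\bf \Delta}^2={\bf \Delta}$, $\operatorname{tr}(\bH_{\bf \Delta}^\top\bH_{\bf \Delta}) = \operatorname{tr}\{\bX^\top\bX(\bX^\top{\bf \Delta}\bX)^{-1}\}$.
\item[(3)] Substituting $\bX=\bm U\bm\Sigma\bV^\top$, the factors $\bm\Sigma\bV^\top$ cancel under the trace, leaving $\operatorname{tr}[(\bm U^\top{\bf \Delta}\bm U)^{-1}] = \Psi_A({\bf \Delta})$.
\end{itemize}

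The only delicate point is justifying that the cross term vanishes: this requires ${\bf \Delta}$ to be chosen from $\bX$ alone, which is the measurement-constrained setup stated before the theorem. The rest is routine algebra, so I do not expect a real obstacle.
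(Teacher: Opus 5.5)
Your proposal is correct and follows the paper's proof essentially step for step. You remove the signal via $\bH_{\bf \Delta}\bX=\bX$, drop the cross term using $\mathbb{E}(\bm\varepsilon)=\bm 0$ with ${\bf \Delta}$ and $f$ deterministic, reuse the identity $\operatorname{tr}[(\mathbf I-\bH_{\bf \Delta})^\top(\mathbf I-\bH_{\bf \Delta})]=n-2d+\Psi_A({\bf \Delta})$ from Theorem~\ref{thm1}, and substitute $f=\tau\sqrt n\,\tilde{\bm U}\bm a$; the only minor differences are that you state $(\mathbf I-\bH_{\bf \Delta})\bX\bm\beta=\bm 0$ explicitly and use only the first two moments of $\bm\varepsilon$, whereas the paper leaves the former implicit and invokes Gaussian noise, which is not needed.
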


While Theorem 2 provides an explicit expression for the expected loss, it depends on $\bm{a}$, which is unknown at the subset selection stage since no response data have been collected. Consequently, direct minimization of the expected loss is not feasible. To address this issue, we adopt the strategy of \cite{Adewale_Wiens_CSDA_2006} and instead minimize the average expected loss over all possible $\bm{a}$ on the unit sphere in \( \mathbb{R}^{n-d} \), that is, $E_{\bm{a}}\mathbb{E}[\mathcal{L}({\bf \Delta})]$, where $\bm{a}$ is uniformly distributed on the unit sphere. The following theorem shows that this average expected loss can be simplified to an expression involving $\Psi_A({\bm \Delta})$; see Appendix~\ref{app:proofs-main} for proof.

\begin{tm}\label{thm:3}
Under Model \ref{mis_model}, suppose the sampling matrix \( {\bm \Delta} = \operatorname{diag}(\delta_1, \ldots, \delta_n) \) with \( \delta_i \in \{0,1\} \) satisfies $\sum_{i=1}^n\delta_i=r$ and \( \operatorname{rank}({\bm \Delta} \bX) = d \). Assuming $\bm{a}$ is uniformly distributed on the unit sphere in  \( \mathbb{R}^{n-d} \), then 
\begin{equation}\label{ave_expected_loss}
E_{\bm{a}}\mathbb{E}[\mathcal{L}({\bm \Delta})]=\eta \cdot \left(n - 2d + \Psi_A({\bm \Delta})\right),
\end{equation}
where 
\[
\eta := \frac{\tau^2 n}{n - d} + \sigma^2,
\qquad 
\Psi_A({\bm \Delta}) := \operatorname{tr}\!\left[({\bm U}^\top {\bm \Delta} {\bm U})^{-1}\right].
\]
\end{tm}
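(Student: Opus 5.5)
Starting from Theorem~\ref{thm:2}, only the misspecification term depends on $\bm a$. I therefore average that term over the uniform distribution on the unit sphere. For $\bm a$ uniform on $S^{n-d-1}$, symmetry gives $E_{\bm a}[\bm a\bm a^\top]=\frac{1}{n-d}\bm I_{n-d}$. Hence, for any fixed $(n-d)\times(n-d)$ matrix $\bm M$,
\[
E_{\bm a}[\bm a^\top \bm M\bm a]=\operatorname{tr}(\bm M)/(n-d).
\]
Applying this with $\bm M=\tilde{\bm U}^\top({\bm I}-\bH_{\bm\Delta})^\top({\bm I}-\bH_{\bm\Delta})\tilde{\bm U}$ reduces the problem to computing
\[
T:=\operatorname{tr}\!\left[\tilde{\bm U}^\top({\bm I}-\bH_{\bm\Delta})^\top({\bm I}-\bH_{\bm\Delta})\tilde{\bm U}\right].
\]
The claim then follows once I show $T=n-2d+\Psi_A({\bm\Delta})$, since $\sigma^2(\cdot)+\frac{\tau^2 n}{n-d}(\cdot)=\eta(\cdot)$.

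To compute $T$, I write $\tilde{\bm U}\tilde{\bm U}^\top={\bm I}-\bm U\bm U^\top$ and use cyclicity of the trace:
\[
T=\operatorname{tr}\!\left[({\bm I}-\bH_{\bm\Delta})^\top({\bm I}-\bH_{\bm\Delta})\right]-\operatorname{tr}\!\left[\bm U^\top({\bm I}-\bH_{\bm\Delta})^\top({\bm I}-\bH_{\bm\Delta})\bm U\right].
\]
The second term vanishes. Because $\bH_{\bm\Delta}\bX=\bX$, the matrix $\bH_{\bm\Delta}$ fixes the column space of $\bX$, so $({\bm I}-\bH_{\bm\Delta})\bm U=0$.

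For the first term, I expand it as
\[
n-2\operatorname{tr}(\bH_{\bm\Delta})+\operatorname{tr}(\bH_{\bm\Delta}^\top\bH_{\bm\Delta}).
\]
The idempotent matrix $\bH_{\bm\Delta}$ has trace $\operatorname{tr}[(\bX^\top{\bm\Delta}\bX)^{-1}\bX^\top{\bm\Delta}\bX]=d$.

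For the quadratic term, substitute $\bX=\bm U\bm\Sigma\bV^\top$. Then $\bH_{\bm\Delta}=\bm U(\bm U^\top{\bm\Delta}\bm U)^{-1}\bm U^\top{\bm\Delta}$, because the $\bm\Sigma\bV^\top$ factors cancel since they are invertible. This gives
\[
\operatorname{tr}(\bH_{\bm\Delta}^\top\bH_{\bm\Delta})=\operatorname{tr}\!\left[{\bm\Delta}\bm U(\bm U^\top{\bm\Delta}\bm U)^{-1}\bm U^\top\bm U(\bm U^\top{\bm\Delta}\bm U)^{-1}\bm U^\top{\bm\Delta}\right].
\]
Using $\bm U^\top\bm U=\bm I$ and ${\bm\Delta}^2={\bm\Delta}$, cyclicity yields $\operatorname{tr}[(\bm U^\top{\bm\Delta}\bm U)^{-1}]=\Psi_A({\bm\Delta})$. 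So $T=n-2d+\Psi_A({\bm\Delta})$, which matches the $\sigma^2$ bracket. This is consistent with the proof of Theorem~\ref{thm1}, where the same trace $\operatorname{tr}[({\bm I}-\bH_{\bm\Delta})^\top({\bm I}-\bH_{\bm\Delta})]$ arises.

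I expect the only delicate points to be the second-moment identity for the uniform sphere and the vanishing of the $\bm U$-block term. Everything else is routine trace algebra, which can be shared with the computation behind Theorem~\ref{thm1}.
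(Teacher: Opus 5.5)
Your proof is correct and follows the same outline as the paper's. Both reduce, via Theorem~\ref{thm:2} and $E_{\bm a}[\bm a\bm a^\top]=(n-d)^{-1}\bm I_{n-d}$, to the trace $T$ and then evaluate it by trace algebra; the only difference is in evaluating $T$. The paper expands $(\bm I-\bH_{\bm{\Delta}})^\top(\bm I-\bH_{\bm{\Delta}})$ inside the $\tilde{\bm U}$ sandwich, drops the cross terms using $\tilde{\bm U}^\top\bH_{\bm{\Delta}}=0$, and computes $\operatorname{tr}(\tilde{\bm U}^\top\bH_{\bm{\Delta}}^\top\bH_{\bm{\Delta}}\tilde{\bm U})=\Psi_A(\bm{\Delta})-d$ directly, whereas your observation $(\bm I-\bH_{\bm{\Delta}})\bm U=0$ identifies $T$ with the Theorem~\ref{thm1} trace outright, which is slightly cleaner and explains why the $\sigma^2$ and $\tau^2$ terms share the same bracket.
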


\begin{remark}
Theorem~\ref{thm:3} is an average-case (Bayes-type) optimality statement under the uniform, rotation-invariant distribution on misspecification directions with fixed norm.
\end{remark}

Theorem~\ref{thm:3} shows that minimizing the average expected prediction loss leads to an A-optimality–type criterion that \added{is optimal for the averaged misspecification model specified above}. 
Unlike classical Fisher-information designs, this criterion directly targets prediction accuracy.
The derived criterion depends only on the design matrix \(X\) and not on the response vector \(\bm{y}\), enabling response-free implementation when responses are unavailable or delayed. 
Moreover, the parameters \(\sigma^2\) and \(\tau^2\) influence only the scaling of the objective, leaving the selection rule invariant.
\begin{algorithm}[H]
\caption{(ELOS implementation)}
\label{Algo1}
\begin{algorithmic}[1]
\STATE Use IBOSS \citep{wang2019information} to select an initial subdata set $S$ of size $r$ and let $\xi_S$ denote the corresponding approximate design.
\STATE Compute $F(\xi_S, {\bm u_i})$ for $i = 1, \ldots, n$, where ${\bm u_i}^{\top}$ is the $i$th row of ${\bm U}$, and $F(\xi_S, {\bm u_i})=\operatorname{tr}\{(\sum_{l\in S}{\bm u_l}{\bm u_l}^\top)^{-1}\}-{\bm u_i}^\top(\sum_{l\in S}{\bm u_l}{\bm u_l}^\top)^{-2}{\bm u_i}$ is the directional derivative at $\xi_S$ in the direction of the Dirac measure $\delta_{{\bm u_i}}$.
\STATE Form $E_1$ consisting of $\left\lceil r / {d} \right\rceil$ points ${\bm u_i} \notin S$ with the smallest values of $F(\xi_S, {\bm u_i})$.
\STATE Form $E_3$ consisting of $\left\lceil r / {d} \right\rceil$ points ${\bm u_i} \in S$ with the largest values of $F(\xi_S, {\bm u_i})$.
\STATE Update the subset: $S = (S \setminus E_3) \cup E_1$.
\STATE Repeat steps 2–5 for ${d}$ iterations.
\end{algorithmic}
\end{algorithm}

\subsection{ ELOS: expected loss optimal selection}
 The results from Theorems \ref{thm1} and \ref{thm:3} motivate our proposed ELOS method. The ELOS criterion aims to minimize the expected prediction loss, which reduces to minimizing an A-optimality–type criterion
\(
\Psi_A({\bm \Delta})
=
\operatorname{tr}\!\left[
({\bm U}^\top {\bm \Delta} {\bm U})^{-1}
\right],
\)
subject to
\(
\sum_{i=1}^n \delta_i=r.
\)
Here, $d$ denotes the number of columns of the compact left singular-vector matrix ${\bm U}\in\mathbb{R}^{n\times d}$.
\added{ELOS refers to the criterion $\Psi_A({\bm \Delta})$ itself, not to a particular numerical solver. IBOSS$+$ is used below as a scalable approximate optimizer. Appendix~\ref{app:optimizer_ablation} compares the same ELOS objective under a different first-order optimal-design optimizer.}
Identifying such an optimal subdata subset is a classical NP-hard problem due to the discrete nature of the selection variables.  
Building on optimal approximate design theory, \citet{Yang2025efficient} developed a new information-based methodology that yields subsets approaching the theoretical optimum.  
Their proposed algorithm IBOSS+ efficiently produces highly informative subdata with substantially reduced computational cost.  
 For practical implementation of ELOS in Algorithm~\ref{Algo1}, we employ the IBOSS$+$ algorithm \citep{Yang2025efficient} to select subsets that approximately minimize the expected prediction loss derived in Section~\ref{Section Proposed}.
 
\begin{figure*}[!t]
    \centering
    \includegraphics[
    width=0.86\textwidth,
    height=0.35\textheight
]{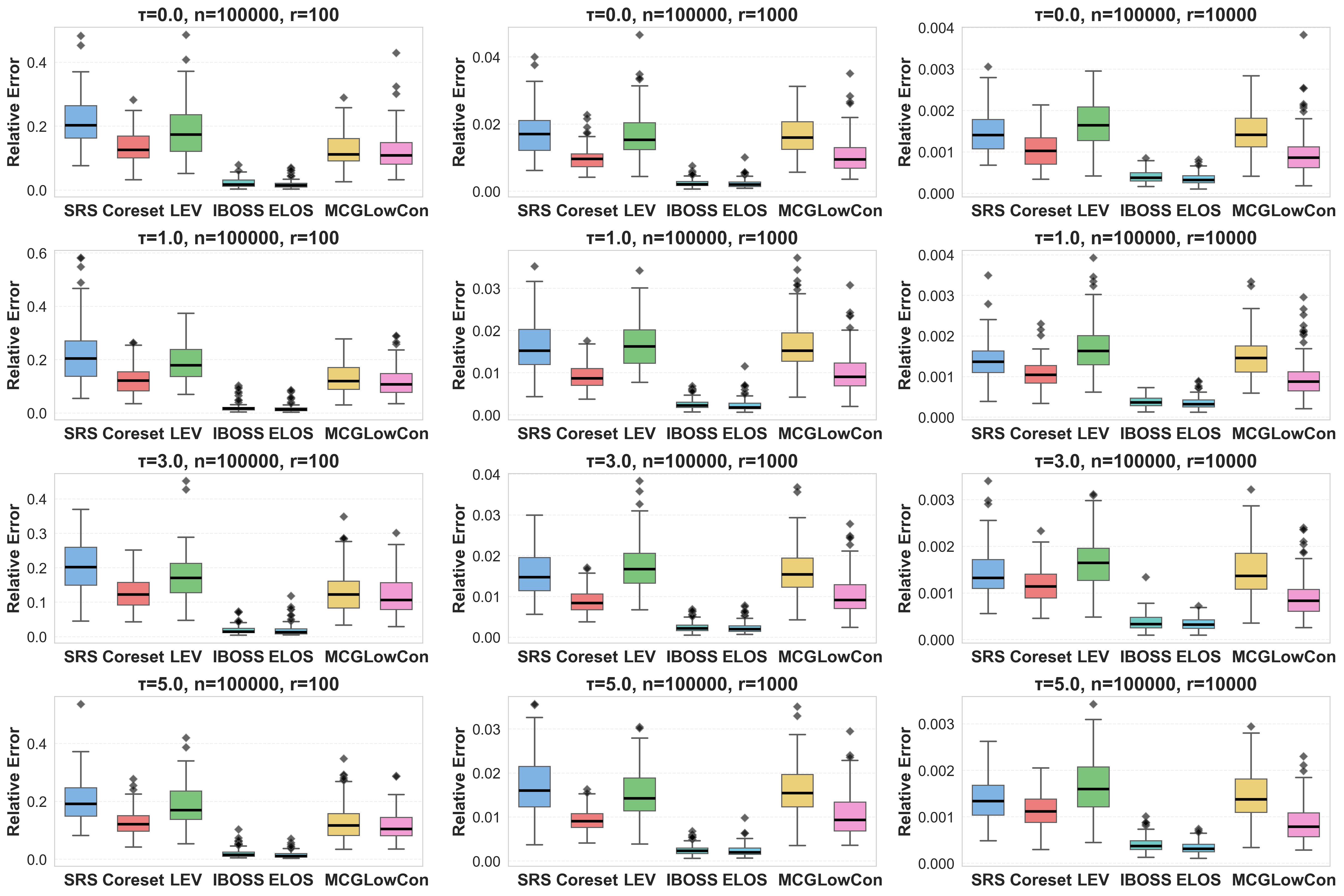}
    \caption{
    Empirical distributions of RE\(_{\mathrm{diff}}\) across 100 replications under \(Z_i\overset{\mathrm{i.i.d.}}{\sim}t_b(5,0,\Sigma)\).  
    }
    \label{fig:t_boxplot_n1e5}
\end{figure*}

\section{Numerical experiments}
\label{Num_Exp}
In this section, we investigate the empirical performance of the proposed subset selection framework through a series of simulation studies and real data examples. The goal is to evaluate how well ELOS performs in terms of prediction accuracy and computational efficiency when only a limited number of responses can be observed.
We compare the following methods: SRS, Coreset \citep{Boutsidis2013}, Leverage \citep{Ma2015}, IBOSS \citep{wang2019information}, MCG \citep{WangYuSingh2017}, LowCon \citep{Meng2021LowCon} and ELOS. 
Coreset is response-dependent and is included as a reference baseline for its robustness to misspecification, not as a measurement-constrained method. \added{Complete implementation and reproducibility details are provided in Appendix~\ref{app:implementation_details}.} 
The full-data OLS estimator
\(
\widehat{\bm\beta}_{\mathrm{full}}
\)
is used as the benchmark.
Predictive performance is measured by the relative error difference
\[
\mathrm{RE}_{\mathrm{diff}}
=
\frac{\|\hat{\bm y}_{\mathrm{sub}}-\bm y\|_2^2-\|\hat{\bm y}_{\mathrm{full}}-\bm y\|_2^2}
{\|\hat{\bm y}_{\mathrm{full}}-\bm y\|_2^2},
\]
where \(\hat{\bm y}_{\mathrm{sub}}\) and \(\hat{\bm y}_{\mathrm{full}}\) are the fitted responses from the subdata estimator and full-data OLS, respectively. Selection runtime includes method-specific preprocessing, such as the SVD for ELOS and the augmented-matrix SVD for coreset. Experiments were run on an Apple MacBook Pro (M4 Pro, 14-core CPU, 48 GB memory) using Julia 1.12.0 with single-threaded BLAS on CPU. 


\subsection{Simulation study}
\label{simu} \vspace{-5pt}
We generate a design matrix \(\bX=[\mathbf{1}_n\;\; \bZ]\in\mathbb R^{n\times {d}}\) including an intercept column and the remaining \(b=15\) columns for covariates, so that \({ d}=16\). For each replication, we draw \(\beta^\star\sim\mathcal N(0,{\bm I}_{d})\) and generate \(\by=\bX\beta^\star+\boldsymbol{f}(\bX)+\bm{\varepsilon}\), with \(\bm{\varepsilon}\sim\mathcal N(\mu\mathbf{1}_n,\sigma^2{\bm I}_n)\), \(\mu=1\), \(\sigma=1\), and $\boldsymbol{f}(\bX)=(f(\bm{x}_1),\ldots,f(\bm{x}_n))^T$. We vary \(\tau\in\{0,1,3,5\}\) for the misspecification term \(f(\bm{x}_i)=\tau\sin(\sum_{j=1}^{b}x_{ij}^2)\). For each subdata size \(r\), we report the relative error difference and the average selection runtime over 100 replications.
Two Scenarios are considered: (i) \(Z_i\overset{\mathrm{i.i.d.}}{\sim}t_b(5,0,{\bm \Sigma})\) and (ii) \(Z_i\overset{\mathrm{i.i.d.}}{\sim}\mathcal N_b(0,{\bm \Sigma})\), where \({\bm \Sigma}=5{\bm I}_b+5{\bm J_b}\) where \({\bm J}_b\) is the \(b\times b\) all-ones matrix. Figure~\ref{fig:t_boxplot_n1e5} presents boxplots of RE\(_{\mathrm{diff}}\) for Scenario (i), with the corresponding averaged RE\(_{\mathrm{diff}}\) and selection runtimes are reported in Table~\ref{tab:re_time_t5_pretest_withcoreset_sinsumsq_allcov_b15_sd1_rep100} (Appendix~\ref{simu:t}). Scenario (ii), which uses multivariate normal covariates, is deferred to Appendix~\ref{app:simulation-boxplots}.

Figure~\ref{fig:t_boxplot_n1e5} shows boxplots of RE\(_{\mathrm{diff}}\) under the heavytailed design across subdata sizes ($r \in \{100, 1000, 10000\}$). IBOSS and ELOS consistently achieve the lowest RE\(_{\mathrm{diff}}\), with ELOS performing comparably to or better than IBOSS in all settings. Both methods also exhibit substantially lower variability than the remaining competitors, particularly at small $r$. 
\added{Their similar performance reflects both favoring informative design points despite different objectives: IBOSS is $D$-optimality-oriented, whereas ELOS targets expected prediction loss. The optimizer ablation in Appendix~\ref{app:optimizer_ablation} further shows that this pattern is not driven by the IBOSS$+$ solver or its IBOSS initialization.} SRS, LEV, and MCG show considerably higher RE\(_{\mathrm{diff}}\), especially at $r=100$, while Coreset and LowCon improve over SRS and LEV but remain above IBOSS and ELOS throughout. 
\added{Coreset is a strong prediction benchmark because it uses the response through $[\bX,\bm y]$, an information advantage unavailable in the measurement-constrained setting.}
The relative ordering of methods changes little with the misspecification level $\tau$, \added{indicating stable empirical performance across the magnitudes examined here}. As $r$ increases, RE\(_{\mathrm{diff}}\) decreases and the gaps among methods narrow as larger subsets recover more information from the full data. Additional results are reported in Appendix~\ref{simu:t}, \added{computational and memory complexity analyses, including comparisons with competing methods, are provided in Appendix~\ref{app:elos_complexity_G}.}

\subsection{Real-data experiments}
\label{real_data}
We further evaluate the methods on two real datasets where the linear model is used as a potentially misspecified working approximation. To assess stability under a fixed design, we use residual bootstrap replicates. After fitting full-data OLS, we center and resample the residuals and form pseudo-responses \(\bm y^*=\hat{\bm y}+\tilde{\bm e}^*\). Each method is then rerun on \((\bX,\bm y^*)\), and performance is summarized over \(B=100\) bootstrap replicates. \added{Note that residual resampling does not preserve residual-covariate dependence and may therefore underrepresent some forms of structured misspecification.}
Additional details and references for the residual bootstrap are provided in Appendix~\ref{app:bootstrap}. 

\begin{figure}[!htpb]
    \centering    \includegraphics[width=0.65\columnwidth]{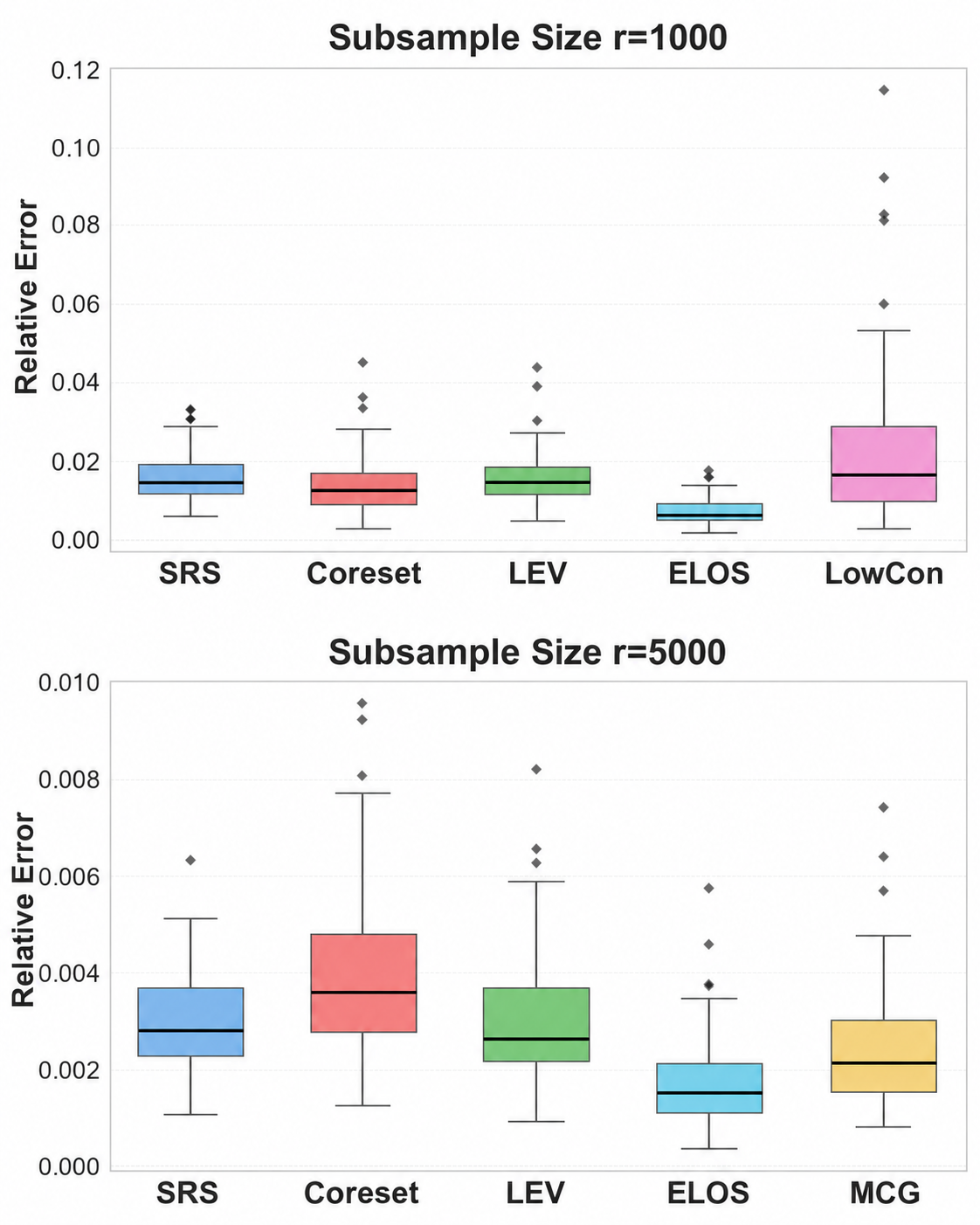}
    \caption{Chemical sensors data. Boxplots of RE\(_{\mathrm{diff}}\) across 100 bootstrap replicates for each subdata size \(r\). MCG at \(r=1000\) and LowCon at \(r=5000\) are omitted due to computational cost, and IBOSS is omitted for scale. Full results and runtimes are reported in Table~\ref{tab:sensor_realdata_residboot} (Appendix~\ref{test_train_sensor}). }
    \label{fig:boxplots_sensor}
\end{figure}

\subsubsection{Chemical sensors data}
\label{subsec:sensor_residboot}
We first consider the chemical sensors dataset of \cite{fonollosa2015dataset}. Following \cite{wang2019information}, we use the same preprocessing protocol: discarding the first 20{,}000 observations, taking the last sensor reading as the response, removing the second sensor due to negative readings, and applying a log transformation to the remaining variables. After preprocessing, the design has \(n=4{,}188{,}261\) observations and \(p=14\) covariates plus an intercept. Additional dataset details and train/test results are provided in Appendix~\ref{test_train_sensor}. Since the sensor-response relationship may involve nonlinear dynamics and interactions, the linear model is treated as a working approximation. For \(r\in\{1000,5000\}\), Figure~\ref{fig:boxplots_sensor} shows that ELOS achieves the smallest RE for both measurement budgets. In runtime, ELOS is slower than SRS, Leverage, and IBOSS, but is substantially faster than Coreset, LowCon and MCG and provides the best predictive accuracy (See Table~\ref{tab:sensor_realdata_residboot} in Appendix~\ref{test_train_sensor} for computation time).

\subsubsection{Million song release year prediction}
\label{subsec:year_msd_residboot}
We next evaluate the methods on YearPredictionMSD\citep{year_prediction_msd_203}, with \(n=515{,}345\), \(p=90\), and release year as the response. Since nonlinear models are often used for this task, the linear model is treated as a misspecified working approximation. We report results over \(B=100\) residual bootstrap replicates and \(r\in\{1000,10{,}000\}\). MCG is omitted for \(r=1000\) and LowCon is omitted for \(r=10000\) due to excessive computational cost. \added{Additional dataset preprocessing details and train/test results are reported in Appendix~\ref{test_train_song}. In the fully response-free setting using all \(k=90\) predictors, ELOS attains the smallest RE under both measurement budgets, as illustrated in Figure~\ref{fig:boxplots_yearmsd_k90}. To further examine how the competing methods behave as the predictor dimension changes, we also consider reduced designs with \(k=30\) and \(k=60\), constructed using Pearson-correlation screening. These additional experiments are reported in Appendix~\ref{test_train_song} and are intended as supplementary comparisons of statistical performance and computational scalability across different dimensions.}

\begin{figure}[htbp]
    \centering
    \includegraphics[width=0.7\columnwidth]{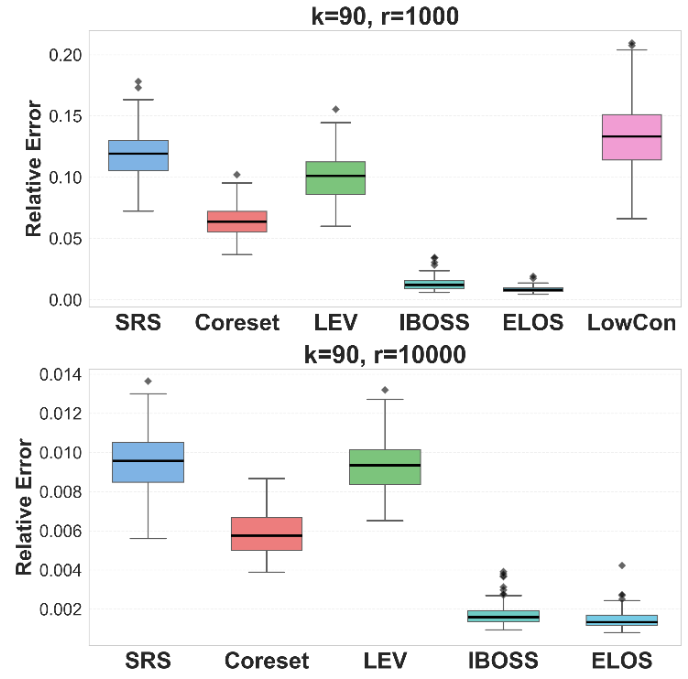}
    \caption{
    Million song release year prediction. Boxplots of RE\(_{\mathrm{diff}}\) across 100 bootstrap replicates for the YearPredictionMSD data using all \(k=90\) predictors. The two panels correspond to measurement budgets \(r=1{,}000\) and \(r=10{,}000\).
    }
    \label{fig:boxplots_yearmsd_k90}
\end{figure}

\section{Discussion}
\label{discussion}

This work develops ELOS for subset selection under measurement constraints and potential model misspecification, addressing the central question of how to identify the most informative subset using only covariate information when responses are costly to collect. ELOS uses a prediction-oriented expected-loss criterion that aligns selection more directly with predictive accuracy than parameter-estimation-based criteria. Compared with coreset methods, which require responses and serve as a \added{prediction-oriented reference benchmark}, ELOS \added{remains competitive across many settings while requiring no responses at the selection stage}. Among response-free baselines (SRS, leverage sampling, IBOSS, MCG, and LowCon), ELOS \added{generally achieves the best or near-best predictive accuracy while remaining scalable to large candidate pools}, offering a favorable balance among predictive performance, measurement feasibility, and computational scalability.
\added{This makes ELOS applicable to resource-constrained settings such as clinical trials, environmental monitoring, precision agriculture, and survey design.}

Several limitations suggest directions for future work.
First, under model misspecification, the expected loss depends on the unknown misspecification direction \(\bm a\), which is unavailable at the subset-selection stage,
\added{we therefore average over \(\bm a\) uniformly on the unit sphere. Theorem~\ref{thm:3} shows that the resulting loss reduces to \(\Psi_A(\bm\Delta)\), yielding a response-free criterion. This is an average-case guarantee rather than optimality for every fixed or structured departure, as also illustrated by the diagnostic experiments in Appendix~\ref{app:misspecification_diagnostic}.}
Alternative robustness criteria, such as minimax expected loss or distributionally robust objectives, may provide stronger protection against unfavorable misspecification directions.

Second, although the framework accommodates a broad class of model misspecifications, its performance may depend on how the deviation from the assumed linear structure is distributed. In situations where such deviations exhibit strong asymmetry, incorporating additional information could further enhance subset selection. For instance, a small pilot study with limited response observations may help identify a more appropriate working model and guide subsequent selection. This naturally leads to an interesting direction for future research: developing adaptive subset selection procedures that effectively integrate limited response information under measurement constraints.

\added{Third, the theory assumes a fixed design with homoscedastic errors; under heteroscedasticity, the expected-loss criterion requires a new derivation. The residual-bootstrap evaluation also does not preserve residual--covariate dependence, so alternative evaluation schemes may better capture structured misspecification.}

Finally, extending the expected-loss framework beyond linear regression, including generalized linear models and nonparametric regression, is an important direction for future work.
The exact response-free reduction obtained here relies on the linear least-squares projection structure. For nonlinear-in-parameters models, analogous local criteria would generally depend on unknown parameter values or pilot estimates and would introduce approximation error, so such extensions require new theoretical development rather than a direct application of the present results.
\newpage

\subsection*{AI Use Statement}

We did not use generative AI tools for any of the required-disclosure tasks listed in the AISTATS 2027 AI Policy.
We used generative AI tools for recommended-disclosure tasks including editing the manuscript to improve readability and grammar, LaTeX formatting, and debugging and editing existing research code.

All AI-assisted work was reviewed by the authors. Bibliographic information was checked against the original sources, and AI-assisted code changes were tested before use. The authors take responsibility for the final content of this work, including text, code, and other artifacts produced with the aid of generative AI.

\bibliographystyle{plainnat}
\bibliography{reference}

\clearpage
\appendix
\thispagestyle{empty}


\onecolumn
\section{Proofs for Section~\ref{Section Proposed}}
\label{app:proofs-main}

\subsection*{Proof of Theorem~\ref{thm1}}
\begin{proof}
Recall the loss function is defined as \( \mathcal{L}(\bm{\Delta}) = \| (\bm{I} - \bm{H}_{\bm{\Delta}})\bm{y} \|^2 \), where \( \bm{H}_{\bm{\Delta}} = \bm{X}(\bm{X}^\top \bm{\Delta} \bm{X})^{-1} \bm{X}^\top \bm{\Delta} \) is the projection matrix based on the subdata. 

Substitute \( \bm{y} = \bm{X}\bm{\beta} + \bm{\varepsilon} \) into the loss:
\begin{align*}
\mathcal{L}(\bm{\Delta}) &= \| (\bm{I} - \bm{H}_{\bm{\Delta}})(\bm{X}\bm{\beta} + \bm{\varepsilon}) \|^2 \\
&= \| (\bm{I} - \bm{H}_{\bm{\Delta}})\bm{X}\bm{\beta} \|^2 + \| (\bm{I} - \bm{H}_{\bm{\Delta}})\bm{\varepsilon} \|^2 + 2 \bm{\varepsilon}^\top (\bm{I} - \bm{H}_{\bm{\Delta}})\bm{X}\bm{\beta}.
\end{align*}

Take expectations,
\begin{align*}
\mathbb{E}[\mathcal{L}(\bm{\Delta})] 
&= \mathbb{E}\left[ \| (\bm{I} - \bm{H}_{\bm{\Delta}})(\bm{X}\bm{\beta} + \bm{\varepsilon}) \|^2 \right] \\
&= \| (\bm{I} - \bm{H}_{\bm{\Delta}})\bm{X}\bm{\beta} \|^2 + \mathbb{E}\left[ \| (\bm{I} - \bm{H}_{\bm{\Delta}})\bm{\varepsilon} \|^2 \right] + 2\, \mathbb{E}\left[ \bm{\varepsilon}^\top (\bm{I} - \bm{H}_{\bm{\Delta}})^\top (\bm{I} - \bm{H}_{\bm{\Delta}})\bm{X}\bm{\beta} \right] \\
&= \| (\bm{I} - \bm{H}_{\bm{\Delta}})\bm{X}\bm{\beta} \|^2 + \operatorname{tr} \left[ (\bm{I} - \bm{H}_{\bm{\Delta}})^\top (\bm{I} - \bm{H}_{\bm{\Delta}}) \cdot \sigma^2 \bm{I}_n \right] + 0 \\
&= \underbrace{\| (\bm{I} - \bm{H}_{\bm{\Delta}})\bm{X}\bm{\beta} \|^2}_{\text{Bias}^2} + \underbrace{\sigma^2 \operatorname{tr} \left[ (\bm{I} - \bm{H}_{\bm{\Delta}})^\top (\bm{I} - \bm{H}_{\bm{\Delta}}) \right]}_{\text{Variance}},
\end{align*}
the cross term vanishes because \( \mathbb{E}[\bm{\varepsilon}] = 0 \)
and the bias term \( \| (\bm{I} - \bm{H}_{\bm{\Delta}})\bm{X}\bm{\beta} \|^2 = 0 \). Then the expected loss becomes:
\[
\mathbb{E}[\mathcal{L}(\bm{\Delta})] = \sigma^2 \operatorname{tr} \left[ (\bm{I} - \bm{H}_{\bm{\Delta}})^\top (\bm{I} - \bm{H}_{\bm{\Delta}}) \right].
\]
Now expand the variance term:
\begin{align*}
(\bm{I} - \bm{H}_{\bm{\Delta}})^\top (\bm{I} - \bm{H}_{\bm{\Delta}}) &= \bm{I} - \bm{H}_{\bm{\Delta}} - \bm{H}_{\bm{\Delta}}^\top + \bm{H}_{\bm{\Delta}}^\top \bm{H}_{\bm{\Delta}}, \\
\Rightarrow \operatorname{tr} \left[ (\bm{I} - \bm{H}_{\bm{\Delta}})^\top (\bm{I} - \bm{H}_{\bm{\Delta}}) \right] &= n - 2 \operatorname{tr}(\bm{H}_{\bm{\Delta}}) + \operatorname{tr}(\bm{H}_{\bm{\Delta}}^\top \bm{H}_{\bm{\Delta}}).
\end{align*}
Note that \( \bm{\Delta} = \operatorname{diag}(\delta_1, \ldots, \delta_n) \) with \( \delta_i \in \{0, 1\} \) so \( \bm{\Delta} = \bm{\Delta}^2 \). We also assume \( \operatorname{rank}(\bm{X}) = \operatorname{rank}(\bm{\Delta} \bm{X}) = d \), ensuring that \( \bm{X}^\top \bm{\Delta} \bm{X} \) is invertible. 
Then we have:
\begin{align*}
\operatorname{tr}(\bm{H}_{\bm{\Delta}}) &= \operatorname{tr}\!\big(\bm{X}(\bm{X}^\top \bm{\Delta} \bm{X})^{-1} \bm{X}^\top \bm{\Delta}\big) = \operatorname{tr}\!\big((\bm{X}^\top \bm{\Delta} \bm{X})^{-1} \bm{X}^\top \bm{\Delta} \bm{X}\big) = \operatorname{tr}(\bm{I}_d) = d.
\end{align*}

Also,
\begin{align*}
\bm{H}_{\bm{\Delta}}^\top \bm{H}_{\bm{\Delta}} &= \bm{\Delta} \bm{X} (\bm{X}^\top \bm{\Delta} \bm{X})^{-1} \bm{X}^\top \bm{X} (\bm{X}^\top \bm{\Delta} \bm{X})^{-1} \bm{X}^\top \bm{\Delta}, \\
\Rightarrow \operatorname{tr}(\bm{H}_{\bm{\Delta}}^\top \bm{H}_{\bm{\Delta}}) &= \operatorname{tr}\!\left[(\bm{X}^\top \bm{\Delta} \bm{X})^{-1} \bm{X}^\top \bm{X} (\bm{X}^\top \bm{\Delta} \bm{X})^{-1} \bm{X}^\top \bm{\Delta}\, \bm{\Delta} \bm{X}\right] \\
&= \operatorname{tr}\!\left[(\bm{X}^\top \bm{\Delta} \bm{X})^{-1} \bm{X}^\top \bm{X} (\bm{X}^\top \bm{\Delta} \bm{X})^{-1} \bm{X}^\top \bm{\Delta} \bm{X}\right] \\
&= \operatorname{tr}\!\left[(\bm{X}^\top \bm{\Delta} \bm{X})^{-1} \bm{X}^\top \bm{X}\right].
\end{align*}

Let \( \bm{X} = \bm{U} \bm{\Sigma} \bm{V}^\top \) be the compact SVD of \( \bm{X} \), where \( \bm{U} \in \mathbb{R}^{n \times d} \), \( \bm{\Sigma} \in \mathbb{R}^{d \times d} \), and \( \bm{V} \in \mathbb{R}^{d \times d} \) with \( \bm{U}^\top \bm{U} = \bm{I}_d \), \( \bm{V}^\top \bm{V} = \bm{I}_d \). Then:
\[
\bm{X}^\top \bm{X} = \bm{V} \bm{\Sigma}^2 \bm{V}^\top, \quad \bm{X}^\top \bm{\Delta} \bm{X} = \bm{V} \bm{\Sigma} \bm{U}^\top \bm{\Delta} \bm{U} \bm{\Sigma} \bm{V}^\top.
\]
Therefore, the inverse of \( \bm{X}^\top \bm{\Delta} \bm{X} \) is:
\[
(\bm{X}^\top \bm{\Delta} \bm{X})^{-1} = \bm{V} \bm{\Sigma}^{-1} (\bm{U}^\top \bm{\Delta} \bm{U})^{-1} \bm{\Sigma}^{-1} \bm{V}^\top.
\]
Multiplying with \( \bm{X}^\top \bm{X} \), we get:
\[
(\bm{X}^\top \bm{\Delta} \bm{X})^{-1} \bm{X}^\top \bm{X} 
= \bm{V} \bm{\Sigma}^{-1} (\bm{U}^\top \bm{\Delta} \bm{U})^{-1} \bm{\Sigma}^{-1} \bm{V}^\top \cdot \bm{V} \bm{\Sigma}^2 \bm{V}^\top 
= \bm{V} \bm{\Sigma}^{-1} (\bm{U}^\top \bm{\Delta} \bm{U})^{-1} \bm{\Sigma} \bm{V}^\top.
\]
Taking the trace:
\[
\operatorname{tr}\!\left( (\bm{X}^\top \bm{\Delta} \bm{X})^{-1} \bm{X}^\top \bm{X} \right) 
= \operatorname{tr}\!\left( \bm{V} \bm{\Sigma}^{-1} (\bm{U}^\top \bm{\Delta} \bm{U})^{-1} \bm{\Sigma} \bm{V}^\top \right)=
\operatorname{tr}\!\left[(\bm{U}^\top \bm{\Delta} \bm{U})^{-1}\right].\] 
Therefore,
\begin{equation*}
\mathbb{E}[\mathcal{L}(\bm{\Delta})] = \sigma^2 \left[n - 2d + \Psi_A(\bm{\Delta}) \right],
\quad \text{with } \Psi_A(\bm{\Delta}) := \operatorname{tr}\!\left[(\bm{U}^\top \bm{\Delta} \bm{U})^{-1}\right]. \tag*{\qed}
\end{equation*}
\end{proof}

\subsection*{Proof of Theorem~\ref{thm:2}}
\begin{proof}
We begin with
\[
\mathcal{L}(\bm{\Delta}) = \| (\bm{I} - \bm{H}_{\bm{\Delta}})(\bm{X}\bm{\beta} + f + \bm{\varepsilon}) \|^2.
\]
Taking expectation over the noise \( \bm{\varepsilon} \sim \mathcal{N}(0, \sigma^2 \bm{I}_n) \), we obtain
\begin{align*}
\mathbb{E}_{\varepsilon}[\mathcal{L}(\bm{\Delta})] 
&= \mathbb{E}_{\varepsilon}\left[\| (\bm{I} - \bm{H}_{\bm{\Delta}})(\bm{X}\bm{\beta} + f + \bm{\varepsilon}) \|^2\right] \\
&= \left\| (\bm{I} - \bm{H}_{\bm{\Delta}})(\bm{X}\bm{\beta} + f) \right\|^2 + \sigma^2 \operatorname{tr}[(\bm{I} - \bm{H}_{\bm{\Delta}})^\top(\bm{I} - \bm{H}_{\bm{\Delta}})]\\
&=
 \| (\bm{I} - \bm{H}_{\bm{\Delta}})f \|^2 + \sigma^2 \operatorname{tr}[(\bm{I} - \bm{H}_{\bm{\Delta}})^\top(\bm{I} - \bm{H}_{\bm{\Delta}})].
\end{align*}

Now, using the assumption \( f = \tau \sqrt{n} \tilde{\bm{U}} \bm{a} \), we have 
\[
\| (\bm{I} - \bm{H}_{\bm{\Delta}})f \|^2 = \tau^2 n \cdot \bm{a}^\top \tilde{\bm{U}}^\top (\bm{I} - \bm{H}_{\bm{\Delta}})^\top (\bm{I} - \bm{H}_{\bm{\Delta}}) \tilde{\bm{U}} \bm{a}.
\]
On the other hand, in the proof of Theorem 1, it has been shown that
$\operatorname{tr}[(\bm{I} - \bm{H}_{\bm{\Delta}})^\top(\bm{I} - \bm{H}_{\bm{\Delta}})]= n - 2d + \Psi_A(\bm{\Delta})$. Thus, the conclusion follows. 
\end{proof}

\subsection*{Proof of Theorem~\ref{thm:3}}
\begin{proof}
Based on the expression in Theorem~\ref{thm:2}, to establish Theorem~\ref{thm:3} it suffices to take expectation over
\( \bm a \sim \mathrm{Unif}(\mathbb S^{n-d-1}) \) for the quadratic term
\[
\bm a^\top \tilde{\bm U}^\top (\bm{I}-\bm{H}_{\bm{\Delta}})^\top (\bm{I}-\bm{H}_{\bm{\Delta}})\tilde{\bm U} \bm a.
\]
Using the identity \(x^\top \bm{A} x=\operatorname{tr}(\bm{A}xx^\top)\), we have
\[
E_{\bm a}\!\left[\bm a^\top \tilde{\bm U}^\top (\bm{I}-\bm{H}_{\bm{\Delta}})^\top (\bm{I}-\bm{H}_{\bm{\Delta}})\tilde{\bm U} \bm a\right]
=
\operatorname{tr}\!\left(
\tilde{\bm U}^\top (\bm{I}-\bm{H}_{\bm{\Delta}})^\top (\bm{I}-\bm{H}_{\bm{\Delta}})\tilde{\bm U} \, E_{\bm a}[\bm a\bm a^\top]
\right).
\]
Since \( \bm a \) is uniformly distributed on the unit sphere in \( \mathbb R^{n-d} \), rotational invariance implies that
\[
E_{\bm a}[\bm a\bm a^\top]=\lambda \bm{I}_{n-d}
\]
for some scalar \( \lambda \). Moreover, since \( \|\bm a\|^2=1 \), we have 
\[
1
=
E_{\bm a}[\|\bm a\|^2]
=
\operatorname{tr}\!\left(E_{\bm a}[\bm a\bm a^\top]\right)
=
\lambda (n-d),
\]
consequently \( \lambda=(n-d)^{-1} \).
Therefore,
\[
E_{\bm{a}}[\bm{a}^\top \tilde{\bm{U}}^\top (\bm{I} - \bm{H}_{\bm{\Delta}})^\top (\bm{I} - \bm{H}_{\bm{\Delta}}) \tilde{\bm{U}} \bm{a}] 
= \frac{1}{n - d} \operatorname{tr} \left( \tilde{\bm{U}}^\top (\bm{I} - \bm{H}_{\bm{\Delta}})^\top (\bm{I} - \bm{H}_{\bm{\Delta}}) \tilde{\bm{U}} \right).
\]

We now simplify this trace:
\begin{align*}
\operatorname{tr}( \tilde{\bm{U}}^\top (\bm{I} - \bm{H}_{\bm{\Delta}})^\top (\bm{I} - \bm{H}_{\bm{\Delta}}) \tilde{\bm{U}} )
&= \operatorname{tr}( \tilde{\bm{U}}^\top \tilde{\bm{U}} )
- 2\operatorname{tr}( \tilde{\bm{U}}^\top \bm{H}_{\bm{\Delta}} \tilde{\bm{U}} ) 
+ \operatorname{tr}( \tilde{\bm{U}}^\top \bm{H}_{\bm{\Delta}}^\top \bm{H}_{\bm{\Delta}} \tilde{\bm{U}} ) \\
&= n - d - 2 \operatorname{tr}( \tilde{\bm{U}}^\top \bm{H}_{\bm{\Delta}} \tilde{\bm{U}} ) + \operatorname{tr}( \tilde{\bm{U}}^\top \bm{H}_{\bm{\Delta}}^\top \bm{H}_{\bm{\Delta}} \tilde{\bm{U}} ).
\end{align*}

Using the SVD \( \bm{X} = \bm{U} \bm{\Sigma} \bm{V}^\top \), we observe that
\[
\bm{H}_{\bm{\Delta}} = \bm{X}(\bm{X}^\top \bm{\Delta} \bm{X})^{-1} \bm{X}^\top \bm{\Delta} = \bm{U} (\bm{U}^\top \bm{\Delta} \bm{U})^{-1} \bm{U}^\top \bm{\Delta}.
\]
Since \( \tilde{\bm{U}}^\top \bm{U} = 0 \), we have
\[
\tilde{\bm{U}}^\top \bm{H}_{\bm{\Delta}} = \tilde{\bm{U}}^\top \bm{U} (\bm{U}^\top \bm{\Delta} \bm{U})^{-1} \bm{U}^\top \bm{\Delta} = 0,
\quad \Rightarrow \quad \operatorname{tr}( \tilde{\bm{U}}^\top \bm{H}_{\bm{\Delta}} \tilde{\bm{U}} ) = 0.
\]

It remains to compute \( \operatorname{tr}( \tilde{\bm{U}}^\top \bm{H}_{\bm{\Delta}}^\top \bm{H}_{\bm{\Delta}} \tilde{\bm{U}} ) \). Using the expression for \( \bm{H}_{\bm{\Delta}} \), we write:
\begin{align*}
\operatorname{tr}( \tilde{\bm{U}}^\top \bm{H}_{\bm{\Delta}}^\top \bm{H}_{\bm{\Delta}} \tilde{\bm{U}} )
&= \operatorname{tr}\left( \tilde{\bm{U}}^\top \bm{\Delta} \bm{U} (\bm{U}^\top \bm{\Delta} \bm{U})^{-2} \bm{U}^\top \bm{\Delta} \tilde{\bm{U}} \right) \\
&= \operatorname{tr}\left( \bm{\Delta} \bm{U} (\bm{U}^\top \bm{\Delta} \bm{U})^{-2} \bm{U}^\top \bm{\Delta} (\bm{I} - \bm{U}\bm{U}^\top) \right) \\
&= \operatorname{tr}\left( (\bm{U}^\top \bm{\Delta} \bm{U})^{-1} \right) - d.
\end{align*}

Putting all of this together:
\[
E_{\bm{a}}[\bm{a}^\top \tilde{\bm{U}}^\top (\bm{I} - \bm{H}_{\bm{\Delta}})^\top (\bm{I} - \bm{H}_{\bm{\Delta}}) \tilde{\bm{U}} \bm{a}] 
= \frac{1}{n - d} \left( n - 2d + \operatorname{tr}\left[(\bm{U}^\top \bm{\Delta} \bm{U})^{-1} \right] \right).
\]

Let
\(
\eta := \frac{\tau^2 n}{n - d} + \sigma^2,
\)
then the desirable conclusion follows.
\end{proof}
\section{Additional simulation study}
\label{app:addition_simulation}
\subsection{Simulation study with \(t\)-distributed covariates}
\label{simu:t}
In this setting, the covariate rows satisfy
\[
\bZ_i \overset{\mathrm{i.i.d.}}{\sim} t_b(5,0,{\bm \Sigma}),
\qquad
{\bm \Sigma} = 5{\bm I_b} + 5{\bm J_b},
\qquad i=1,\dots,n.
\]
All other components of the simulation setup remain the same as described in \ref{simu}.

\begin{table*}[!t]
\centering
\footnotesize
\setlength{\tabcolsep}{2.3pt}
\renewcommand{\arraystretch}{0.9}
\caption{RE\(_{\mathrm{diff}}\) and averaged selection runtime over 100 replications under multivariate \(t(5)\) covariates with covariance matrix \({\bm \Sigma} = 5{\bm I_b} + 5{\bm J_b}\). Each entry shows RE\(_{\mathrm{diff}}\) (Time). The smallest RE\(_{\mathrm{diff}}\) in each row is bolded. Hyphens (-) indicate settings that were omitted or not completed because the computations were prohibitively expensive.}
\label{tab:re_time_t5_pretest_withcoreset_sinsumsq_allcov_b15_sd1_rep100}

\begin{tabular}{ccc ccccccc}
\toprule
$n$ & $r$ & $\tau$ & SRS & Coreset & Leverage & IBOSS & ELOS & LowCon & MCG \\
\midrule

$10^{5}$ & $10^{2}$ & 0 & $.2171$ (0.00) & $.1341$ (0.05) & $.1891$ (0.02) & $.0219$ (0.04) & $\mathbf{.0180}$ (0.16) & $.1198$ (0.14) & $.1261$ (14.93) \\
$10^{5}$ & $10^{3}$ & 0 & $.0173$ (0.00) & $.0096$ (0.37) & $.0166$ (0.02) & $.0024$ (0.03) & $\mathbf{.0023}$ (0.09) & $.0107$ (0.99) & $.0165$ (15.57) \\
$10^{5}$ & $10^{4}$ & 0 & $.0015$ (0.00) & $.0011$ (4.95) & $.0017$ (0.11) & $.0004$ (0.04) & $\mathbf{.0003}$ (0.25) & $.0010$ (9.36) & $.0015$ (4.77) \\

$10^{6}$ & $10^{2}$ & 0 & $.2201$ (0.00) & $.1279$ (0.26) & $.1980$ (0.27) & $.0165$ (0.34) & $\mathbf{.0129}$ (0.92) & $.1184$ (1.78) & - \\
$10^{6}$ & $10^{3}$ & 0 & $.0170$ (0.00) & $.0090$ (4.41) & $.0160$ (0.29) & $.0016$ (0.37) & $\mathbf{.0014}$ (1.05) & $.0096$ (13.14) & - \\
$10^{6}$ & $10^{4}$ & 0 & $.0015$ (0.01) & $.0010$ (58.69) & $.0016$ (0.32) & $.0003$ (0.37) & $\mathbf{.0002}$ (1.88) & - & $.0015$ (11.28) \\

\midrule

$10^{5}$ & $10^{2}$ & 1 & $.2262$ (0.00) & $.1256$ (0.06) & $.1925$ (0.02) & $.0206$ (0.03) & $\mathbf{.0171}$ (0.13) & $.1177$ (0.15) & $.1293$ (15.41) \\
$10^{5}$ & $10^{3}$ & 1 & $.0166$ (0.00) & $.0089$ (0.37) & $.0166$ (0.02) & $.0025$ (0.03) & $\mathbf{.0023}$ (0.09) & $.0102$ (0.98) & $.0166$ (15.74) \\
$10^{5}$ & $10^{4}$ & 1 & $.0014$ (0.00) & $.0011$ (4.92) & $.0017$ (0.10) & $.0004$ (0.04) & $\mathbf{.0003}$ (0.26) & $.0010$ (9.36) & $.0015$ (4.76) \\

$10^{6}$ & $10^{2}$ & 1 & $.2315$ (0.00) & $.1279$ (0.25) & $.1952$ (0.28) & $.0154$ (0.34) & $\mathbf{.0134}$ (0.92) & $.1113$ (1.80) & - \\
$10^{6}$ & $10^{3}$ & 1 & $.0177$ (0.00) & $.0092$ (4.43) & $.0163$ (0.29) & $.0017$ (0.36) & $\mathbf{.0015}$ (1.04) & $.0090$ (13.03) & - \\
$10^{6}$ & $10^{4}$ & 1 & $.0016$ (0.01) & $.0010$ (59.00) & $.0016$ (0.31) & $.0002$ (0.37) & $\mathbf{.0002}$ (1.82) & - & $.0015$ (11.39) \\

\midrule

$10^{5}$ & $10^{2}$ & 3 & $.2047$ (0.00) & $.1265$ (0.06) & $.1753$ (0.02) & $.0194$ (0.03) & $\mathbf{.0191}$ (0.10) & $.1188$ (0.16) & $.1300$ (15.33) \\
$10^{5}$ & $10^{3}$ & 3 & $.0156$ (0.00) & $.0087$ (0.37) & $.0177$ (0.02) & $.0025$ (0.03) & $\mathbf{.0024}$ (0.10) & $.0104$ (0.99) & $.0161$ (15.99) \\
$10^{5}$ & $10^{4}$ & 3 & $.0014$ (0.00) & $.0012$ (4.85) & $.0016$ (0.11) & $.0004$ (0.04) & $\mathbf{.0003}$ (0.26) & $.0009$ (9.49) & $.0015$ (4.76) \\

$10^{6}$ & $10^{2}$ & 3 & $.2274$ (0.00) & $.1289$ (0.25) & $.1837$ (0.28) & $.0165$ (0.34) & $\mathbf{.0130}$ (0.92) & $.1079$ (1.76) & - \\
$10^{6}$ & $10^{3}$ & 3 & $.0162$ (0.00) & $.0093$ (4.54) & $.0175$ (0.29) & $.0018$ (0.37) & $\mathbf{.0016}$ (1.05) & $.0091$ (12.99) & - \\
$10^{6}$ & $10^{4}$ & 3 & $.0016$ (0.01) & $.0010$ (114.76) & $.0016$ (0.31) & $.0003$ (0.37) & $\mathbf{.0002}$ (2.25) & - & $.0015$ (10.83) \\

\midrule

$10^{5}$ & $10^{2}$ & 5 & $.2038$ (0.00) & $.1273$ (0.04) & $.1872$ (0.02) & $.0211$ (0.04) & $\mathbf{.0153}$ (0.10) & $.1197$ (0.15) & $.1297$ (17.27) \\
$10^{5}$ & $10^{3}$ & 5 & $.0172$ (0.00) & $.0091$ (0.37) & $.0153$ (0.02) & $.0025$ (0.03) & $\mathbf{.0024}$ (0.10) & $.0105$ (0.99) & $.0160$ (15.77) \\
$10^{5}$ & $10^{4}$ & 5 & $.0014$ (0.00) & $.0011$ (4.79) & $.0017$ (0.10) & $.0004$ (0.04) & $\mathbf{.0003}$ (0.26) & $.0009$ (9.42) & $.0015$ (4.81) \\

$10^{6}$ & $10^{2}$ & 5 & $.2262$ (0.00) & $.1323$ (0.26) & $.1991$ (0.28) & $.0159$ (0.36) & $\mathbf{.0130}$ (0.94) & $.1081$ (1.75) & - \\
$10^{6}$ & $10^{3}$ & 5 & $.0167$ (0.00) & $.0088$ (4.55) & $.0170$ (0.30) & $.0015$ (0.37) & $\mathbf{.0014}$ (1.07) & $.0092$ (13.23) & - \\
$10^{6}$ & $10^{4}$ & 5 & $.0016$ (0.01) & $.0011$ (100.58) & $.0016$ (0.38) & $.0003$ (0.53) & $\mathbf{.0002}$ (2.31) & - & $.0015$ (10.64) \\

\bottomrule
\end{tabular}
\end{table*}


Table~\ref{tab:re_time_t5_pretest_withcoreset_sinsumsq_allcov_b15_sd1_rep100} shows that IBOSS and ELOS consistently achieve the lowest RE\(_{\mathrm{diff}}\) across all settings, with ELOS performing comparably to or better than IBOSS in all cases. SRS, LEV, LowCon and MCG show considerably higher relative errors, especially at $r=100$. Coreset performs better than SRS and LEV but remains well above IBOSS and ELOS throughout. LowCon also improves over SRS and LEV in many settings, but its RE\(_{\mathrm{diff}}\) remains substantially larger than those of IBOSS and ELOS. In terms of selection runtime, SRS is the fastest, followed by leverage scoring and IBOSS, as these methods are explicitly designed to balance computational cost and statistical efficiency. MCG and LowCon incur substantially higher runtime than all other methods.  The runtimes of Coreset and ELOS are mixed: Coreset is faster at small $r$, but its runtime grows substantially with $r$, whereas ELOS scales more favorably as $r$ increases.

For the MCG method, we ran the same \((n,r,\tau)\) combinations described in \ref{simu} for a direct performance comparison. The results indicate that the measurement-constrained gradient method is sensitive to the covariate distribution: under heavy-tailed covariates, both its predictive performance and its computational cost can deteriorate noticeably. In particular, the runtime appears to depend not only on the problem size, but also on the behavior of the internal optimization routine, such as the number of iterations required by the algorithm. We also experimented with larger iteration caps (500--1000) in some cases, but found that this substantially increased the computational cost while yielding little improvement in RE. This observation is in line with the discussion in \citet[Proposition 21]{WangYuSingh2017}, who remark that the theoretical convergence rate is conservative and that practical convergence is often achieved in only a few iterations, especially when the budget \(r\) is large. Additional diagnostic summaries are reported in Appendix~\ref{appendix:diag} Table~\ref{tab:measurement_constrained_diag_t_by_tau}.

\subsection{Simulation study with normal distributed covariates}
\label{app:simulation-boxplots}

The simulation Table~\ref{tab:re_time_normal_withcoreset_sinsumsq_allcov_b15_sd1_rep100} reports averaged RE\(_{\mathrm{diff}}\) and selection runtime over 100 replications for each experimental setting. To provide a more complete view of the variability across replications, we include boxplots in Figure~\ref{fig:normal_boxplot_n1e5} of RE\(_{\mathrm{diff}}\).
Figure~\ref{fig:normal_boxplot_n1e5} shows the empirical distributions of RE\(_{\mathrm{diff}}\) across 100 replications under multivariate normal covariates with \(n=10^{5}\). Each panel corresponds to one \((\tau,r)\) setting, and each boxplot summarizes the distribution of RE\(_{\mathrm{diff}}\) for one method across replications. 

Overall, the boxplots show ELOS generally exhibits small RE\(_{\mathrm{diff}}\) and relatively concentrated distributions, especially for smaller subset sizes. Coreset also performs strongly in several settings, particularly when \(r\) is large, but requires substantially larger selection time as reported in the main table.
\begin{table*}[!t]
\centering
\setlength{\tabcolsep}{2.2pt}
\renewcommand{\arraystretch}{0.75}
\caption{Averaged RE\(_{\mathrm{diff}}\) and selection runtime over 100 replications under multivariate normal covariates. Each entry shows RE\(_{\mathrm{diff}}\) (Time). The smallest RE\(_{\mathrm{diff}}\) in each row is bolded. The full variability of RE\(_{\mathrm{diff}}\) is visualized by boxplots in Figure~\ref{fig:normal_boxplot_n1e5}. Hyphens (-) indicate settings that were omitted or not completed because the computations were prohibitively expensive.}
\label{tab:re_time_normal_withcoreset_sinsumsq_allcov_b15_sd1_rep100}

{\footnotesize
\begin{tabular}{ccc ccccccc}
\toprule
$n$ & $r$ & $\tau$ & SRS & Coreset & Leverage & IBOSS & ELOS & LowCon & MCG \\
\midrule

$10^{5}$ & $10^{2}$ & 0 & $.1857$ (0.00) & $.1051$ (0.10) & $.1863$ (0.02) & $.1235$ (0.03) & $\mathbf{.0703}$ (0.11) & $.1615$ (0.28) & $.1594$ (15.08) \\
$10^{5}$ & $10^{3}$ & 0 & $.0160$ (0.00) & $\mathbf{.0044}$ (0.28) & $.0166$ (0.06) & $.0107$ (0.03) & $.0075$ (0.13) & $.0146$ (1.14) & $.0162$ (1.68) \\
$10^{5}$ & $10^{4}$ & 0 & $.0014$ (0.00) & $\mathbf{.0004}$ (4.66) & $.0016$ (0.15) & $.0011$ (0.04) & $.0009$ (0.11) & $.0012$ (10.02) & $.0016$ (9.77) \\

$10^{6}$ & $10^{2}$ & 0 & $.1909$ (0.00) & $.1064$ (0.27) & $.1769$ (0.26) & $.1134$ (0.34) & $\mathbf{.0686}$ (0.92) & $.1671$ (3.40) & - \\
$10^{6}$ & $10^{3}$ & 0 & $.0158$ (0.00) & $\mathbf{.0044}$ (4.53) & $.0149$ (0.28) & $.0103$ (0.36) & $.0065$ (1.02) & $.0131$ (13.12) & - \\
$10^{6}$ & $10^{4}$ & 0 & $.0015$ (0.01) & $\mathbf{.0003}$ (58.58) & $.0017$ (0.34) & $.0011$ (0.38) & $.0007$ (1.87) & $.0014$ (122.39) & $.0016$ (10.49) \\

\midrule

$10^{5}$ & $10^{2}$ & 1 & $.1947$ (0.00) & $.1016$ (0.05) & $.1876$ (0.02) & $.1107$ (0.03) & $\mathbf{.0683}$ (0.12) & $.1765$ (0.24) & $.1584$ (15.39) \\
$10^{5}$ & $10^{3}$ & 1 & $.0155$ (0.00) & $\mathbf{.0043}$ (0.27) & $.0164$ (0.06) & $.0117$ (0.03) & $.0075$ (0.13) & $.0140$ (1.14) & $.0158$ (1.69) \\
$10^{5}$ & $10^{4}$ & 1 & $.0015$ (0.00) & $\mathbf{.0003}$ (4.58) & $.0015$ (0.15) & $.0010$ (0.04) & $.0008$ (0.11) & $.0013$ (10.40) & $.0016$ (9.99) \\

$10^{6}$ & $10^{2}$ & 1 & $.1982$ (0.00) & $.1043$ (0.26) & $.1902$ (0.27) & $.1033$ (0.34) & $\mathbf{.0655}$ (0.91) & $.1748$ (3.17) & - \\
$10^{6}$ & $10^{3}$ & 1 & $.0161$ (0.00) & $\mathbf{.0042}$ (12.14) & $.0163$ (0.30) & $.0099$ (0.37) & $.0061$ (1.03) & $.0132$ (14.29) & - \\
$10^{6}$ & $10^{4}$ & 1 & $.0016$ (0.01) & $\mathbf{.0003}$ (87.14) & $.0016$ (0.32) & $.0011$ (0.37) & $.0008$ (1.85) & $.0014$ (128.47) & $.0015$ (10.56) \\

\midrule

$10^{5}$ & $10^{2}$ & 3 & $.2022$ (0.00) & $.1044$ (0.11) & $.1965$ (0.01) & $.1326$ (0.03) & $\mathbf{.0705}$ (0.11) & $.1843$ (0.26) & $.1556$ (15.47) \\
$10^{5}$ & $10^{3}$ & 3 & $.0154$ (0.00) & $\mathbf{.0037}$ (0.27) & $.0164$ (0.06) & $.0107$ (0.03) & $.0077$ (0.13) & $.0144$ (1.15) & $.0159$ (1.70) \\
$10^{5}$ & $10^{4}$ & 3 & $.0014$ (0.00) & $\mathbf{.0003}$ (4.89) & $.0016$ (0.15) & $.0011$ (0.04) & $.0008$ (0.11) & $.0014$ (10.38) & $.0015$ (9.99) \\

$10^{6}$ & $10^{2}$ & 3 & $.1943$ (0.00) & $.1019$ (0.27) & $.1904$ (0.27) & $.1143$ (0.34) & $\mathbf{.0661}$ (0.91) & $.1798$ (3.34) & - \\
$10^{6}$ & $10^{3}$ & 3 & $.0170$ (0.00) & $\mathbf{.0038}$ (4.66) & $.0175$ (0.30) & $.0109$ (0.37) & $.0066$ (1.07) & $.0134$ (14.93) & - \\
$10^{6}$ & $10^{4}$ & 3 & $.0017$ (0.01) & $\mathbf{.0003}$ (97.27) & $.0016$ (0.32) & $.0011$ (0.38) & $.0008$ (1.97) & $.0014$ (128.46) & $.0015$ (10.50) \\

\midrule

$10^{5}$ & $10^{2}$ & 5 & $.1906$ (0.00) & $.0979$ (0.13) & $.1912$ (0.01) & $.1160$ (0.03) & $\mathbf{.0723}$ (0.10) & $.1841$ (0.26) & $.1548$ (15.08) \\
$10^{5}$ & $10^{3}$ & 5 & $.0171$ (0.00) & $\mathbf{.0036}$ (0.28) & $.0162$ (0.06) & $.0115$ (0.03) & $.0076$ (0.13) & $.0146$ (1.12) & $.0160$ (1.67) \\
$10^{5}$ & $10^{4}$ & 5 & $.0014$ (0.00) & $\mathbf{.0003}$ (14.22) & $.0016$ (0.16) & $.0011$ (0.04) & $.0009$ (0.12) & $.0014$ (9.37) & $.0015$ (9.90) \\

$10^{6}$ & $10^{2}$ & 5 & $.1888$ (0.00) & $.0963$ (0.28) & $.1921$ (0.28) & $.1139$ (0.34) & $\mathbf{.0640}$ (0.92) & $.1801$ (3.13) & - \\
$10^{6}$ & $10^{3}$ & 5 & $.0156$ (0.00) & $\mathbf{.0040}$ (4.76) & $.0165$ (0.31) & $.0101$ (0.39) & $.0069$ (1.12) & $.0135$ (14.40) & - \\
$10^{6}$ & $10^{4}$ & 5 & $.0017$ (0.01) & $\mathbf{.0003}$ (82.26) & $.0016$ (0.31) & $.0011$ (0.36) & $.0008$ (2.26) & - & $.0015$ (10.86) \\

\bottomrule
\end{tabular}
}
\end{table*}

\begin{figure*}[!t]
    \centering
    \includegraphics[width=0.88\textwidth]{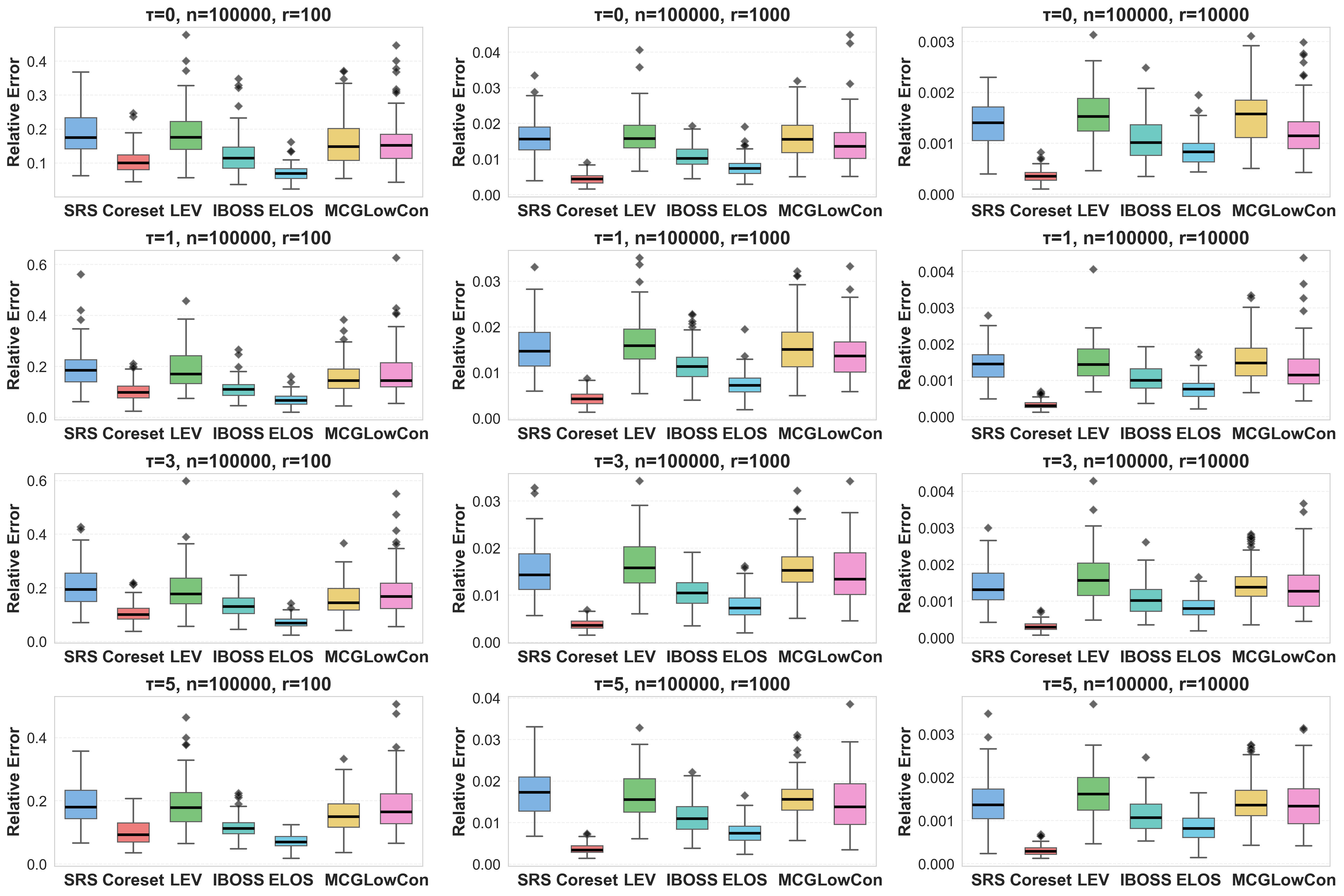}
    \caption{
    Empirical distributions of RE\(_{\mathrm{diff}}\) across 100 replications under multivariate normal covariates with \(n=10^{5}\). The corresponding averaged RE\(_{\mathrm{diff}}\) and selection runtimes are reported in Table~\ref{tab:re_time_normal_withcoreset_sinsumsq_allcov_b15_sd1_rep100}. Each panel corresponds to a fixed \((\tau,r)\) setting, and each boxplot summarizes the variation of one method across replications. This figure complements the main table by showing the spread, stability, and potential outliers of the competing methods.
    }
    \label{fig:normal_boxplot_n1e5}
\end{figure*}

\section{Diagnostic tables for MCG}
\label{appendix:diag}

Table~\ref{tab:measurement_constrained_diag_normal_by_tau} and Table~\ref{tab:measurement_constrained_diag_t_by_tau} report additional diagnostic summaries for the measurement-constrained gradient (MCG) method under the simulation settings considered in the main text. In addition to the relative error difference and runtime, the tables include several optimization-related quantities, such as the mean number of outer iterations, the iteration-cap hit rate, the average backtracking count, the average number of projection iterations, and the final relative improvement. These results provide additional insight into the computational behavior of MCG across different problem settings.
\begin{table*}[!t]
\centering
\footnotesize
\caption{Diagnostics of the measurement-constrained method under multivariate normal covariates.}
\label{tab:measurement_constrained_diag_normal_by_tau}
\setlength{\tabcolsep}{3pt}
\renewcommand{\arraystretch}{0.75}
\begin{tabular}{cccccccccc}
\toprule
$n$ & $r$ & $\tau$ & Mean RE &
\begin{tabular}[c]{@{}c@{}}Time\\ (s)\end{tabular} &
\begin{tabular}[c]{@{}c@{}}Mean Outer\\ Iter\end{tabular} &
\begin{tabular}[c]{@{}c@{}}Cap Hit\\ Rate\end{tabular} &
\begin{tabular}[c]{@{}c@{}}Mean Avg\\ BT\end{tabular} &
\begin{tabular}[c]{@{}c@{}}Avg Proj\\ Iter\end{tabular} &
\begin{tabular}[c]{@{}c@{}}Final Rel\\ Improve\end{tabular} \\
\midrule
$10^5$ & $10^2$ & 0 & 0.1594 & 15.08 & 20 & 1.0 & 1.0 & 12.313 & 5.10e-03 \\
$10^5$ & $10^2$ & 1 & 0.1584 & 15.39 & 20 & 1.0 & 1.0 & 12.313 & 5.10e-03 \\
$10^5$ & $10^2$ & 3 & 0.1556 & 15.47 & 20 & 1.0 & 1.0 & 12.313 & 5.10e-03 \\
$10^5$ & $10^2$ & 5 & 0.1548 & 15.08 & 20 & 1.0 & 1.0 & 12.313 & 5.10e-03 \\
\addlinespace
$10^5$ & $10^3$ & 0 & 0.0162 & 1.68 & 2 & 0.0 & 1.0 & 14.000 & 6.80e-06 \\
$10^5$ & $10^3$ & 1 & 0.0164 & 1.59 & 2 & 0.0 & 1.0 & 14.000 & 6.80e-06 \\
$10^5$ & $10^3$ & 3 & 0.0149 & 1.56 & 2 & 0.0 & 1.0 & 14.000 & 6.80e-06 \\
$10^5$ & $10^3$ & 5 & 0.0146 & 1.69 & 2 & 0.0 & 1.0 & 14.000 & 6.80e-06 \\
\addlinespace
$10^5$ & $10^4$ & 0 & 0.0014 & 9.64 & 1 & 0.0 & 50.0 & 950.000 & -9.54e-07 \\
$10^5$ & $10^4$ & 1 & 0.0014 & 9.53 & 1 & 0.0 & 50.0 & 950.000 & -9.54e-07 \\
$10^5$ & $10^4$ & 3 & 0.0014 & 9.57 & 1 & 0.0 & 50.0 & 950.000 & -9.54e-07 \\
$10^5$ & $10^4$ & 5 & 0.0014 & 9.59 & 1 & 0.0 & 50.0 & 950.000 & -9.54e-07 \\
\addlinespace
$10^6$ & $10^4$ & 0 & 0.0015 & 5.47 & 1 & 0.0 & 1.0 & 18.000 & 1.46e-07 \\
$10^6$ & $10^4$ & 1 & 0.0015 & 5.48 & 1 & 0.0 & 1.0 & 18.000 & 1.46e-07 \\
$10^6$ & $10^4$ & 3 & 0.0014 & 5.52 & 1 & 0.0 & 1.0 & 18.000 & 1.46e-07 \\
$10^6$ & $10^4$ & 5 & 0.0015 & 5.51 & 1 & 0.0 & 1.0 & 18.000 & 1.46e-07 \\
\bottomrule
\end{tabular}
\end{table*}

\begin{table*}[!t]
\centering
\footnotesize
\caption{Diagnostics of the measurement-constrained method under multivariate \(t(5)\) covariates.}
\label{tab:measurement_constrained_diag_t_by_tau}
\setlength{\tabcolsep}{3pt}
\renewcommand{\arraystretch}{0.75}
\begin{tabular}{cccccccccc}
\toprule
$n$ & $r$ & $\tau$ & Mean RE &
\begin{tabular}[c]{@{}c@{}}Time\\ (s)\end{tabular} &
\begin{tabular}[c]{@{}c@{}}Mean Outer\\ Iter\end{tabular} &
\begin{tabular}[c]{@{}c@{}}Cap Hit\\ Rate\end{tabular} &
\begin{tabular}[c]{@{}c@{}}Mean Avg\\ BT\end{tabular} &
\begin{tabular}[c]{@{}c@{}}Avg Proj\\ Iter\end{tabular} &
\begin{tabular}[c]{@{}c@{}}Final Rel\\ Improve\end{tabular} \\
\midrule
$10^5$ & $10^2$ & 0 & 0.1261 & 14.93 & 20 & 1.0 & 1.001 & 15.4685 & 6.44e-3 \\
$10^5$ & $10^2$ & 1 & 0.1293 & 15.41 & 20 & 1.0 & 1.001 & 15.4685 & 6.44e-3 \\
$10^5$ & $10^2$ & 3 & 0.1300 & 15.32 & 20 & 1.0 & 1.001 & 15.4685 & 6.44e-3 \\
$10^5$ & $10^2$ & 5 & 0.1297 & 17.27 & 20 & 1.0 & 1.001 & 15.4685 & 6.44e-3 \\
\addlinespace
$10^5$ & $10^3$ & 0 & 0.0165 & 15.57 & 20 & 1.0 & 1.000 & 13.2660 & 1.16e-4 \\
$10^5$ & $10^3$ & 1 & 0.0166 & 15.74 & 20 & 1.0 & 1.000 & 13.2660 & 1.16e-4 \\
$10^5$ & $10^3$ & 3 & 0.0161 & 15.99 & 20 & 1.0 & 1.000 & 13.2660 & 1.16e-4 \\
$10^5$ & $10^3$ & 5 & 0.0160 & 15.77 & 20 & 1.0 & 1.000 & 13.2660 & 1.16e-4 \\
\addlinespace
$10^5$ & $10^4$ & 0 & 0.0015 & 4.77 & 1 & 0.0 & 20.000 & 380.0000 & -9.54e-7 \\
$10^5$ & $10^4$ & 1 & 0.0015 & 4.76 & 1 & 0.0 & 20.000 & 380.0000 & -9.54e-7 \\
$10^5$ & $10^4$ & 3 & 0.0015 & 4.76 & 1 & 0.0 & 20.000 & 380.0000 & -9.54e-7 \\
$10^5$ & $10^4$ & 5 & 0.0015 & 4.81 & 1 & 0.0 & 20.000 & 380.0000 & -9.54e-7 \\
\addlinespace
$10^6$ & $10^4$ & 0 & 0.0015 & 11.27 & 1 & 0.0 & 1.000 & 18.0000 & 6.57e-7 \\
$10^6$ & $10^4$ & 1 & 0.0015 & 11.38 & 1 & 0.0 & 1.000 & 18.0000 & 6.57e-7 \\
$10^6$ & $10^4$ & 3 & 0.0015 & 10.83 & 1 & 0.0 & 1.000 & 18.0000 & 6.57e-7 \\
$10^6$ & $10^4$ & 5 & 0.0015 & 10.64 & 1 & 0.0 & 1.000 & 18.0000 & 6.57e-7 \\
\bottomrule
\end{tabular}
\end{table*}

\section{Additional details for real-data experiments}
\label{app:real_data_details}

\subsection{Residual bootstrap procedure}
\label{app:bootstrap}
The residual bootstrap is a classical tool for regression inference. Its construction and validity for least-squares regression have been studied in the regression-bootstrap literature, including \citet{Freedman1981}, empirical investigations such as \citet{FreedmanPeters1984}, and standard treatments in \citet{EfronTibshirani1993,DavisonHinkley1997}. Related residual-bootstrap ideas have also been considered in modern large-scale regression; for example, \citet{GangulySenguptaGhosh2024} proposed a subsampled residual bootstrap for generalized linear models to reduce the cost of full-size bootstrap resampling.

Specifically, for each real dataset, we first fit the full-data OLS estimator and compute
\[
\hat{\bm y}=\bX\hat{\bm\beta}_{\mathrm{full}},
\qquad
\hat{\bm e}=\bm y-\hat{\bm y}.
\]
The residuals are centered as \(\tilde e_i=\hat e_i-\bar{\hat e}\), then resampled with replacement to obtain \(\tilde e_i^*\). The bootstrap pseudo-response is defined as
\[
\bm y^*=\hat{\bm y}+\tilde{\bm e}^*.
\]
\added{Because the centered residuals are resampled independently of the covariates, this residual bootstrap does not preserve residual--covariate dependence. If an important component of the real-data misspecification is carried by such dependence, the bootstrap experiment may understate that aspect of structured misspecification.}

We adopt the residual bootstrap to generate pseudo-responses for the real-data evaluation. One may wonder whether this approach inadvertently masks misspecification by construction. However, committing to a specific parametric form of misspecification would evaluate methods under one arbitrarily chosen departure from linearity, which is no less restrictive. In practice, before responses are collected, the nature of any structural departure from the linear model — whether nonlinear dynamics, omitted interactions, or other mechanisms — is unknown. The residual bootstrap instead draws resamples from the empirical residual distribution, which implicitly captures whatever unexplained structure remains after fitting the linear model to the observed data. To the extent that the original data contain systematic departures from linearity, these are partially reflected in the residuals and propagated into the bootstrap samples. The residual bootstrap therefore provides a data-driven, assumption-lean way to mimic the range of data-generating mechanisms consistent with the observed data, without committing to a single misspecification scenario — a more honest reflection of the uncertainty a practitioner faces in real applications.

\section{Training and testing split results}
\label{app:train_test}

\subsection{Chemical Sensor Data}
\label{test_train_sensor}

The chemical sensors dataset was generated at the ChemoSignals Laboratory within the BioCircuits Institute at the University of California, San Diego. It contains measurements from an array of 16 chemical sensors exposed to air mixtures containing Ethylene and CO at varying concentration levels. Each experiment involved continuous signal acquisition over approximately 12 hours, during which concentration changes occurred at randomly selected times and magnitudes. Additional details are provided in \cite{fonollosa2015dataset}.

Following the preprocessing protocol of \cite{wang2019information}, we discard the first 20{,}000 observations corresponding to system run-in time, use the last sensor reading as the response, exclude the second sensor due to negative readings, and apply a log transformation to the remaining variables. A linear regression model with intercept is then fitted on the transformed scale. After preprocessing, the analyzed design has \(n=4{,}188{,}261\) observations and \(p=14\) covariates plus an intercept, matching the dimensions reported in \cite{wang2019information}. Since the underlying sensor-response relationship may involve nonlinear dynamics, interactions, or other complex effects, the linear model should be viewed as a working approximation.
\begin{table*}[!t]
\centering
\footnotesize
\setlength{\tabcolsep}{2.5pt}
\renewcommand{\arraystretch}{0.9}
\renewcommand{\arraystretch}{1}
\caption{Residual bootstrap results on the Chemical Sensor dataset. We report RE\(_{\mathrm{diff}}\) with respect to full-data OLS and total selection runtime (seconds), averaged over 100 bootstrap replicates, with standard deviations in parentheses. Runtime includes all method-specific preprocessing required for subset construction, including the SVD used by ELOS and the augmented-matrix SVD used by coreset. The smallest RE\(_{\mathrm{diff}}\) in each \(r\) block is bolded. LowCon for \(r=5{,}000\) and MCG for \(r=1{,}000\) were omitted because the corresponding computations were prohibitively expensive and preliminary runs did not show competitive predictive performance.}
\label{tab:sensor_realdata_residboot}
\begin{tabular}{lcc@{\hspace{8mm}}cc}
\toprule
& \multicolumn{2}{c}{$r = 1{,}000$} & \multicolumn{2}{c}{$r = 5{,}000$} \\
\cmidrule(lr){2-3}\cmidrule(lr){4-5}
Method & RE(Std.) & Sel. Time (s) & RE(Std.) & Sel. Time (s) \\
\midrule
SRS      & 0.0158 (0.0062) & 0.0224 (0.0022) & 0.0029 (0.0010) & 0.0221 (0.0056) \\
Coreset  & 0.0137 (0.0068) & 34.2633 (3.4782) & 0.0039 (0.0017) & 179.3040 (13.4508) \\
Leverage & 0.0156 (0.0064) & 1.0293 (0.0967) & 0.0030 (0.0013) & 0.8829 (0.0574) \\
IBOSS    & 0.0521 (0.0325) & 0.9884 (0.0738) & 0.0098 (0.0069) & 1.2138 (0.1142) \\
ELOS     & \textbf{0.0073 (0.0031)} & 4.0822 (0.4192) & \textbf{0.0017 (0.0009)} & 4.7520 (1.0088) \\
LowCon   & 0.0224 (0.0193) & 106.7156 (11.2301) & -- & -- \\
MCG      & -- & -- & 0.0024 (0.0011) & 101.5206 (4.5337) \\
\bottomrule
\end{tabular}
\end{table*}
To further assess generalization beyond the fixed-design residual-bootstrap comparison, we also evaluate each method using a fixed held-out test set of 10,000 observations in Table~\ref{tab:sensor_realdata_residboot1}. 
The train and test RE values show consistent patterns across methods and subsampling size r, with relative differences typically small, indicating that the observed performance differences are not driven by overfitting to 
the training portion of the design. The ranking of methods is also consistent with the full-data evaluation: ELOS delivers the smallest test RE at both \(r=1{,}000\) and \(r=5{,}000\), whereas IBOSS again shows the largest performance gap, reinforcing the conclusion that ELOS provides improved robustness on this misspecified real-data design.

\begin{table*}[!t]
\centering
\footnotesize
\setlength{\tabcolsep}{2pt}
\renewcommand{\arraystretch}{1}
\caption{Residual bootstrap results on the Chemical Sensor dataset. We report training and test RE\(_{\mathrm{diff}}\) and runtime (seconds), averaged over \(B=50\) completed replicates, with standard deviations in parentheses. Runtime includes all method-specific preprocessing required for subset construction, including the SVD used by ELOS and the augmented-matrix SVD used by coreset. The smallest test RE\(_{\mathrm{diff}}\) in each \(r\) block is bolded. LowCon and MCG were omitted from this comparison because preliminary runs showed limited predictive improvement while requiring substantially higher computational time.}
\label{tab:sensor_realdata_residboot1}
\begin{tabular}{lccc@{\hspace{5mm}}ccc}
\toprule
& \multicolumn{3}{c}{$r = 1{,}000$} & \multicolumn{3}{c}{$r = 5{,}000$} \\
\cmidrule(lr){2-4}\cmidrule(lr){5-7}
Method & Train & Test & Time & Train & Test & Time \\
\midrule
SRS      & .0151 (.0068) & .0161 (.0076) & 0.02 (0.00) & .0030 (.0008) & .0030 (.0013) & 0.02 (0.00) \\
Coreset  & .0145 (.0063) & .0141 (.0073) & 18.35 (1.20) & .0049 (.0023) & .0053 (.0026) & 93.28 (4.68) \\
Leverage & .0185 (.0062) & .0188 (.0073) & 1.11 (0.15) & .0030 (.0008) & .0031 (.0013) & 0.95 (0.06) \\
IBOSS    & .0405 (.0242) & .0422 (.0257) & 1.10 (0.18) & .0084 (.0057) & .0089 (.0068) & 1.73 (0.07) \\
ELOS     & .0076 (.0025) & \textbf{.0078} (.0028) & 4.92 (0.32) & .0018 (.0009) & \textbf{.0020} (.0014) & 5.57 (0.50) \\
\bottomrule
\end{tabular}
\end{table*}
\subsection{Million Song Dataset}
\label{test_train_song}

The YearPredictionMSD dataset\citep{year_prediction_msd_203} was constructed from the Million Song Dataset and is available from the UCI Machine Learning Repository. It contains \(n=515{,}345\) tracks, each represented by \(p=90\) acoustic descriptors derived from timbre and spectral characteristics. The release year of each track is used as the response variable.

This dataset is commonly used in large-scale regression and kernel-learning studies, where flexible nonlinear models are often employed to capture complex relationships between acoustic features and temporal information. In contrast, we use a linear regression working model. Given the scale and complexity of the data, this linear specification should not be interpreted as correctly specified, but rather as a deliberately restricted working approximation. The dataset therefore provides a realistic large-scale setting for evaluating subdata selection under possible model misspecification.

Before subdata selection, we apply feature screening to all methods to evaluate their performance under different dimensional settings. Specifically, for the residual-bootstrap comparison in Table~\ref{tab:year_msd_residboot_noscr}, we retain the top \(k\in\{30,60,90\}\) predictors ranked by absolute Pearson correlation \(|\mathrm{corr}(\by,\bX_j)|\) and form the screened design matrix
\[
\bA=[\mathbf{1},\,\bX_{\mathrm{screen}}].
\]

The $k=30$ and $k=60$ panels in Figure~\ref{fig:boxplots_yearmsd}  use response-dependent Pearson screening only to define common reduced predictor sets, they are shown as conditional dimensionality and computational-scaling comparisons. The $k=90$ panels use all original predictors and require no response-dependent screening.

We also conduct a train/test evaluation using a held-out test set of size \(10{,}000\). 
Similarly, we apply a common feature screening step to define the feature set for each value of \(k\). Specifically, features are ranked by the absolute Pearson correlations with the response, and the same selected feature set is used for both training and held-out test sets. This screening step is used only to define the feature dimension under comparison.


Table~\ref{tab:year_msd_residboot_tt} shows that the train and test RE values are close across methods, suggesting that the relative performance differences are stable beyond the training design. ELOS attains the smallest test RE\(_{\mathrm{diff}}\) in the higher-dimensional settings.

\begin{figure*}[!t]
    \centering
    \includegraphics[width=0.88\textwidth]{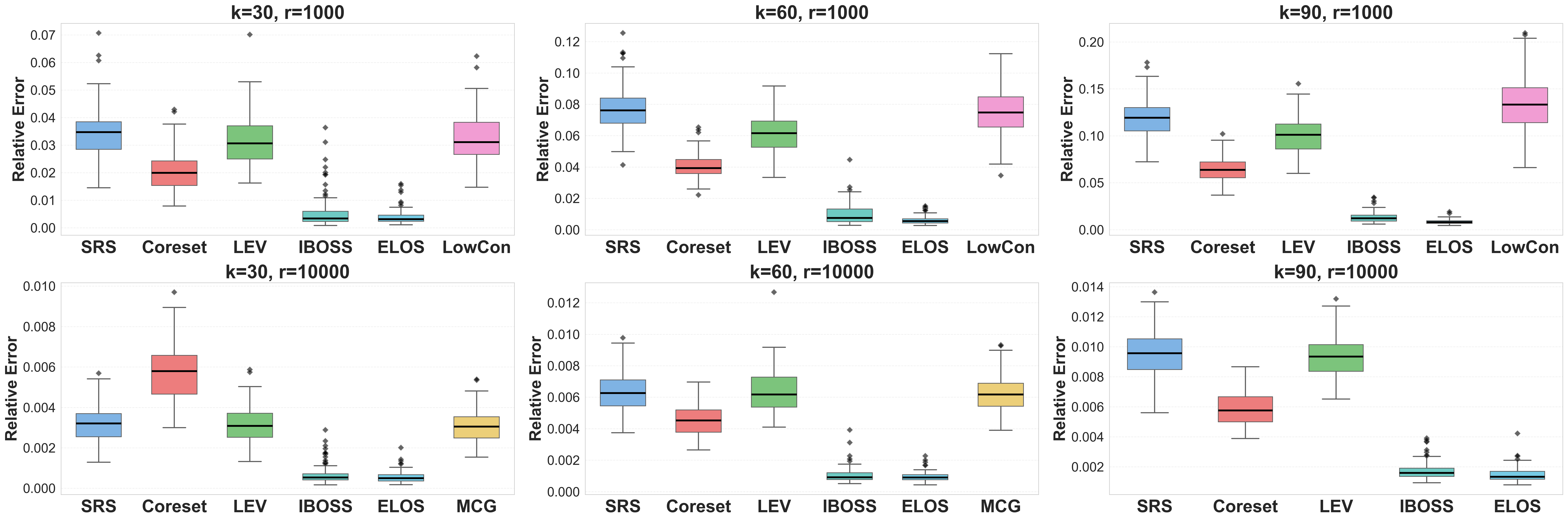}
    \caption{Boxplots of RE\(_{\mathrm{diff}}\) across 100 bootstrap replicates for the million song data, organized by number of predictors k and subsample size r. Columns represent different subsample sizes, rows represent different numbers of predictors. Table~\ref{tab:year_msd_residboot_noscr} in Appendix~\ref{test_train_song} gives averaged RE\(_{\mathrm{diff}}\), computational time, and standard deviations.}
    \label{fig:boxplots_yearmsd}
\end{figure*}

\begin{table*}[!t]
\centering
\footnotesize
\renewcommand{\arraystretch}{0.9}
\setlength{\tabcolsep}{0.7pt}
\caption{Residual bootstrap results on the YearPredictionMSD dataset: relative error (RE) with respect to full-data OLS and selection runtime (seconds). RE is reported as mean (standard deviation) over 100 bootstrap replicates. For each $(r,k)$ block, the smallest mean RE is bolded. Hyphens (-) indicate settings omitted because the computations were prohibitively expensive; in particular, LowCon for \(r=10{,}000\) was not completed due to its substantially higher runtime.}
\label{tab:year_msd_residboot_noscr}
\begin{tabular}{lcccccc}
\toprule
& \multicolumn{2}{c}{$k = 30$} & \multicolumn{2}{c}{$k = 60$} & \multicolumn{2}{c}{$k = 90$} \\
\cmidrule(lr){2-3}\cmidrule(lr){4-5}\cmidrule(lr){6-7}
Method & RE (Std.) & Sel. Time (s) & RE (Std.) & Sel. Time (s) & RE (Std.) & Sel. Time (s) \\
\midrule
\multicolumn{7}{l}{\textbf{Subdata size:} $r = 1{,}000$} \\
SRS      & 0.0349 (0.0094) & 0.0001 & 0.0767 (0.0144) & 0.0001 & 0.1197 (0.0194) & 0.0001 \\
Coreset  & 0.0202 (0.0066) & 2.4631 & 0.0374 (0.0078) & 6.2496 & 0.0615 (0.0116) & 15.3861 \\
Leverage & 0.0319 (0.0091) & 0.1847 & 0.0626 (0.0129) & 0.4967 & 0.0972 (0.0206) & 1.0365 \\
IBOSS    & 0.0056 (0.0061) & 0.4557 & 0.0083 (0.0065) & 0.7674 & 0.0121 (0.0055) & 1.2238 \\
ELOS     & \textbf{0.0041} (0.0027) & 2.4850 & \textbf{0.0061} (0.0026) & 9.4978 & \textbf{0.0083} (0.0025) & 28.1364 \\
LowCon   & 0.0325 (0.0094) & 9.0637 & 0.0757 (0.0146) & 17.2030 & 0.1356 (0.0281) & 22.6801 \\
MCG      & -- & -- & -- & -- & -- & -- \\
\midrule
\multicolumn{7}{l}{\textbf{Subdata size:} $r = 10{,}000$} \\
SRS      & 0.0029 (0.0009) & 0.0002 & 0.0061 (0.0012) & 0.0002 & 0.0094 (0.0015) & 0.0002 \\
Coreset  & 0.0064 (0.0014) & 34.5773 & 0.0047 (0.0010) & 63.5551 & 0.0064 (0.0010) & 132.1411 \\
Leverage & 0.0031 (0.0009) & 0.3171 & 0.0064 (0.0014) & 0.5865 & 0.0093 (0.0014) & 1.0988 \\
IBOSS    & 0.0007 (0.0005) & 0.3579 & 0.0011 (0.0005) & 0.7190 & 0.0017 (0.0006) & 1.1905 \\
ELOS     & \textbf{0.0006} (0.0003) & 2.8058 & \textbf{0.0010} (0.0003) & 9.8884 & \textbf{0.0015} (0.0005) & 28.9546 \\
LowCon   & -- & -- & -- & -- & -- & -- \\
MCG      & 0.0031 (0.0008) & 87.3603 & 0.0062 (0.0012) & 200.2522 & -- & -- \\
\bottomrule
\end{tabular}
\end{table*}

\begin{table*}[!t] \centering
\footnotesize \setlength{\tabcolsep}{5pt} \renewcommand{\arraystretch}{0.95} \caption{Residual bootstrap results on the YearPredictionMSD dataset. We report training and test RE\(_{\mathrm{diff}}\) and selection time (seconds), averaged over \(B=50\) replicates. For each \((r,k)\) block, the smallest train and test RE\(_{\mathrm{diff}}\) are bolded. LowCon and MCG were omitted from this comparison because preliminary runs showed limited predictive improvement while requiring substantially higher computational time.} \label{tab:year_msd_residboot_tt} \begin{tabular}{lccccccccc} \toprule & \multicolumn{3}{c}{$k = 30$} & \multicolumn{3}{c}{$k = 60$} & \multicolumn{3}{c}{$k = 90$} \\ \cmidrule(lr){2-4}\cmidrule(lr){5-7}\cmidrule(lr){8-10} Method & Train & Test & Time & Train & Test & Time & Train & Test & Time \\ \midrule \multicolumn{10}{l}{\textbf{$r = 1{,}000$}} \\ SRS & .0343 & .0338 & 0.0001 & .0751 & .0746 & 0.0001 & .1213 & .1220 & 0.0000 \\ Coreset & .0211 & .0206 & 3.0470 & .0428 & .0428 & 7.3525 & .0631 & .0638 & 14.9176 \\ Leverage & .0302 & .0299 & 0.1912 & .0645 & .0654 & 0.5567 & .0970 & .0970 & 0.8733 \\ IBOSS & .0051 & .0051 & 0.5574 & .0096 & .0095 & 0.7754 & .0119 & .0118 & 1.1007 \\ ELOS & \textbf{.0038} & \textbf{.0040} & 2.8156 & \textbf{.0066} & \textbf{.0068} & 10.1676 & \textbf{.0082} & \textbf{.0085} & 26.1221 \\ \midrule \multicolumn{10}{l}{\textbf{$r = 10{,}000$}} \\ SRS & .0030 & .0028 & 0.0003 & .0061 & .0059 & 0.0003 & .0091 & .0092 & 0.0002 \\ Coreset & .0058 & .0059 & 43.4545 & .0045 & .0046 & 71.5766 & .0058 & .0058 & 124.6524 \\ Leverage & .0033 & .0033 & 0.3771 & .0062 & .0065 & 0.6260 & .0090 & .0092 & 1.0054 \\ IBOSS & .0007 & .0009 & 0.4470 & .0011 & .0012 & 0.7585 & .0017 & .0018 & 1.1005 \\ ELOS & \textbf{.0006} & \textbf{.0007} & 3.2027 & \textbf{.0010} & \textbf{.0009} & 10.2947 & \textbf{.0014} & \textbf{.0015} & 26.2919 \\ \bottomrule \end{tabular} \end{table*}

{

\section{Implementation and Reproducibility Details}
\label{app:implementation_details}

This section consolidates the implementation choices used for ELOS and
the competing methods in the numerical experiments. The
data-generating mechanisms, candidate-pool sizes, subdata sizes,
real-data preprocessing procedures, and evaluation metrics are
described in the corresponding experimental sections. Here, we focus
on the method-specific choices required to reproduce the reported
results, including initialization, tuning parameters, iteration limits,
stopping rules, randomized rounding, and duplicate handling.

The method-specific settings described below were fixed in advance and
used across both the simulation and real-data experiments. Throughout this section, \(n\)
denotes the full candidate-pool size, \(d\) denotes the number of
regression coefficients including the intercept, and \(r\) denotes the
target subdata size.

\subsection{Implementation of ELOS}
\label{app:elos_implementation}

ELOS is implemented by applying the IBOSS$+$ block-exchange procedure
to the left singular-vector matrix obtained from the compact singular
value decomposition of the full design matrix. Specifically, for
\(\bm X=\bm U\bm\Sigma\bm V^\top\), we apply the selection procedure
to \(\bm U^\top\), whose columns are the information vectors
corresponding to the candidate observations.

The initial subset of size \(r\) is constructed using IBOSS. The
intercept coordinate is excluded from the predictor-wise extreme-value
ranking used in this initialization. We use the trace criterion,
corresponding to \(p_{\mathrm{crit}}=1\), and set the criterion-weight
matrix to \(\bm W_b=\bm I_d\).

At each block-exchange iteration, the exchange batch size is fixed at 
\(q=\lceil r/d\rceil\). The directional criterion is
evaluated for both the selected and unselected observations. The \(q\)
selected observations with the smallest directional scores are removed,
and the \(q\) unselected observations with the largest directional
scores are added. The information matrix is then updated using the
exchanged observations.

We perform exactly \(T=d\) block-exchange iterations and do not apply
an additional data-dependent early-stopping rule. The procedure returns
a subset containing exactly \(r\) distinct observations. In the revised
Algorithm~1, \(d\) denotes the model dimension, \(q\) denotes the
exchange batch size, and \(T\) denotes the number of block-exchange
iterations; these quantities are defined separately to avoid the
previous ambiguity in the use of \(p\).

\subsection{Implementation of Competing Methods}
\label{app:competing_implementation}

\paragraph{Simple random sampling.}
SRS selects \(r\) observations uniformly from the full candidate pool without replacement.

\paragraph{Leverage sampling.}
We use exact rather than approximate leverage scores. For each
candidate observation, the score
\[
h_i
=
\bm x_i^\top
(\bm X^\top\bm X)^{-1}
\bm x_i
\]
is computed and normalized to obtain its sampling probability.
We draw \(r\) observations with replacement and estimate the
regression coefficients using inverse-probability weighted least
squares.

\paragraph{IBOSS.}
The intercept column is excluded from the extreme-value ranking. For
each of the \(d-1\) non-intercept predictors, IBOSS selects
\(r/\{2(d-1)\}\) observations from each tail among the
currently unselected observations. Selected indices are removed from
the candidate pool before the next predictor is processed, thereby
preventing duplicate selections. If fewer than \(r\) observations have
been selected after all predictors are processed, the remaining
positions are filled from the unselected candidate pool.

\paragraph{Coreset.}
We implement the constrained single-response least-squares coreset
construction of \citet{Boutsidis2013}. Given the full design matrix
\(\bm X\) and response vector \(\bm y\), the method forms the
augmented matrix \([\bm X,\bm y]\) and computes its compact singular
value decomposition. The \textsc{SimpleSampling} procedure in
Algorithm~2 of \citet{Boutsidis2013} is then applied to the resulting
left singular-vector matrix to construct an \(r\)-row sampling matrix
\(\bm S\) and a diagonal rescaling matrix \(\bm D\). The regression
coefficients are estimated by weighted least squares using
\(\bm S\bm X\), \(\bm S\bm y\), and \(\bm D\).

Within each iteration, candidate vectors are examined
until the lower- and upper-barrier conditions are satisfied. We cap
this candidate search at 200 attempts as a numerical safeguard.
Because the augmented matrix contains \(\bm y\), the particular
Coreset baseline implemented in our experiments is response-dependent
and is included as a prediction-oriented reference baseline rather
than as a measurement-constrained selection method.

\paragraph{MCG.}
The MCG implementation initializes the relaxed inclusion weights at
\(w_i^{(0)}=r/n\). It minimizes the trace-inverse criterion through
projected-gradient updates with Armijo parameter \(\alpha=0.25\),
step-size reduction factor \(\beta=0.5\), and projection tolerance
\(0.01\). Each line search is capped at 50 backtracking steps. The
outer optimization terminates when the relative objective improvement
is no longer greater than \(10^{-5}\), or after 20 iterations,
whichever occurs first.

The optimized continuous weights are converted to a binary selection
vector through independent Bernoulli draws,
\(\delta_i\sim\operatorname{Bernoulli}(w_i)\). Consequently, the
realized subdata size is random and need not equal \(r\) exactly. We
record the realized subdata size in the corresponding experiments.

\paragraph{LowCon.}
For LowCon, the intercept is retained in the final regression design
but excluded from the scaling and matching operations. Each
non-intercept covariate is scaled to \([-1,1]\). We generate 100
randomized Latin hypercube designs and retain the design with the
smallest condition number of its regularized Gram matrix, where
\(10^{-12}\bm I\) is added for numerical stability.

The retained design is mapped to the coordinate-wise 1st--99th
percentile range of the scaled candidate data. Each design point is
then matched to an observation using a KD-tree. Matching is performed
without replacement. The nearest-neighbor search initially considers
\[
k_0=\min\{n,\max(10,0.02n)\}
\]
candidate observations. If all of these candidates have already been
selected, the number of searched neighbors is doubled until an unused
observation is found. The procedure therefore returns exactly \(r\)
distinct observations.

\subsection{Random Seeds and Computational Environment}
\label{app:randomness_computation}

All random-number seeds were fixed before the experiments. The
simulation studies used the base seed \(202603\). For the main
simulation scripts, the random-number generator was initialized once
at the beginning of each experimental configuration, and the
replications were then generated sequentially from the resulting
random-number stream. For scripts parallelized over replications,
including the separate MCG runs, replication \(j\) used seed
\(202603+j\). The real-data residual-bootstrap experiments used the
base seed \(202510\), with bootstrap-specific seeds generated
deterministically from the bootstrap index. These seed choices were
held fixed across the reported experimental configurations.

The method-specific settings described above were fixed in advance and
were used in both the simulation and real-data experiments; no
dataset-specific tuning was performed. The numbers of Monte Carlo and
bootstrap replications, data-generating distributions, measurement
budgets, and real-data preprocessing procedures are reported in the
corresponding experimental sections.

The reported runtime includes the method-specific operations used in
the experimental implementation. In particular, the reported ELOS
runtime includes the compact SVD of \(\bm X\) together with the
IBOSS initialization and block-exchange refinement. The reported
Coreset runtime includes the compact SVD of \([\bm X,\bm y]\), the
\textsc{SimpleSampling} construction, and the weighted coreset fit.
The timing scope for the remaining methods follows the implementations
described above.

All experiments were implemented in Julia 1.12.0 and run on CPU on an
Apple MacBook Pro with an M4 Pro 14-core CPU and 48 GB of unified
memory. BLAS was restricted to a single thread; when parallelization
was used, it was applied only across independent replications.

\section{Computational and Memory Complexity}
\label{app:elos_complexity_G}
\subsection{Complexity of the Proposed ELOS Procedure}
\label{app:elos_complexity}

The ELOS selection procedure consists of three principal stages:
computing the left singular vectors of the full design, constructing an
initial subset using IBOSS, and refining that subset using the
IBOSS$+$ exchange procedure.

First, consider the compact singular value decomposition
\[
\bm X
=
\bm U\bm\Sigma\bm V^\top,
\]
where \(\bm X\in\mathbb R^{n\times d}\) and
\(\bm U\in\mathbb R^{n\times d}\). In the intended setting \(n\gg d\),
computing the compact SVD requires
\[
O(nd^2)
\]
operations and storing \(\bm U\) requires \(O(nd)\) memory.
\begin{table*}[!t]
\centering
\footnotesize
\setlength{\tabcolsep}{2.5pt}
\renewcommand{\arraystretch}{0.9}
\caption{
Computational and memory complexity of the subdata-selection methods.
Here, \(n\) is the full candidate-pool size, \(d\) is the number of
regression coefficients including the intercept, and \(r\) is the
target subdata size. For Coreset,
\(\ell=\operatorname{rank}([\bm X,\bm y])\). For MCG, \(K\), \(L\),
and \(B\) denote the numbers of outer projected-gradient iterations,
backtracking iterations, and projection-bisection iterations,
respectively. For LowCon, \(s\) denotes the number of candidate OLHD
constructions. The displayed bounds are simplified under \(n\gg d\)
and \(d\leq r\leq n\). A superscript \({}^{*}\) indicates that the
corresponding source paper does not explicitly provide the displayed
complete asymptotic bound. In such cases, the bound is derived from
the published algorithm and the implementation used in our
experiments. Memory complexity refers to working memory excluding
storage of the original input data.
}
\label{tab:computational_complexity}
\begin{tabular}{lccc}
\toprule
Method
& Computational complexity
& Memory complexity
& Reference \\
\midrule

SRS
& \(O(r)\)
& \(O(r)\)
& -- \\

Leverage
& \(O(nd^{2})^{*}\)
& \(O(nd+r^{2})^{*}\)
& \citet{Ma2015} \\

IBOSS
& \(O(nd)\)
& \(O(n)^{*}\)
& \citet{wang2019information} \\

Coreset
& \(O(rn\ell^{2})\)
& \(O(n\ell+rn+\ell^{2})^{*}\)
& \citet{Boutsidis2013} \\

MCG
& \(O\!\left(
KL\{nd^{2}+Bn\}
\right)^{*}\)
& \(O(nd)^{*}\)
& \citet{WangYuSingh2017} \\

LowCon
& \(O\!\left(
dn\log n+s(rd^{2}+d^{3})
\right)^{*}\)
& \(O(nd)^{*}\)
& \citet{Meng2021LowCon} \\

ELOS
& \(O(nd^{3})\)
& \(O(nd)\)
& -- \\

\bottomrule
\end{tabular}
\end{table*}

The initial subset is constructed by applying IBOSS to the columns of
\(\bm U\). This step requires \(O(nd)\) operations\citep{wang2019information} and is therefore
lower order than the compact SVD and the subsequent exchange stage.

The IBOSS$+$ refinement performs \(d\) block-exchange iterations. In
each exchange iteration, the algorithm evaluates the directional
criterion over the full candidate pool. Multiplying the relevant
\(d\times d\) criterion matrix by the \(d\times n\) candidate matrix
and evaluating the corresponding quadratic scores requires
\[
O(nd^2)
\]
operations. The matrix inversion, pseudoinversion, matrix square root,
and related operations involving only \(d\times d\) matrices require
\(O(d^3)\) operations per exchange iteration. Consequently, the \(d\)
exchange iterations require
\[
O\!\left(
d(nd^2+d^3)
\right)
=
O(nd^3+d^4).
\]

Combining the compact SVD, IBOSS initialization, and IBOSS$+$ exchange
steps gives the complete bound
\[
O(nd^2)+O(nd)+O(nd^3+d^4)
=
O(nd^2+nd^3+d^4).
\]

Under the intended large-candidate-pool regime \(n\gg d\), the
\(nd^3\) term dominates \(d^4\), yielding
\[
O(nd^3).
\]
Thus, for fixed low or moderate model dimension \(d\), ELOS scales
linearly with the full candidate-pool size \(n\).

The dominant memory requirement is the storage of the
\(n\times d\) matrix \(\bm U\), together with an additional
\(d\times n\) information-vector representation used by the exchange
routine. Candidate scores and index sets require \(O(n+r)\) storage,
while the information matrices require \(O(d^2)\). The overall working
memory is therefore
\[
O(nd+n+r+d^2).
\]
Because \(r\leq n\), this simplifies to
\[
O(nd).
\]

\subsection{Complexity Comparison with Existing Methods}
\label{app:complexity_comparison}

Table~\ref{tab:computational_complexity} compares the computational
and working-memory complexities of ELOS (with IBOSS+ implementation) with those of the competing
subdata-selection methods considered in our numerical studies. When a complete computational or memory
complexity is not explicitly stated in the original reference, we derive
the corresponding bound from the published algorithm and the
implementation used in our experiments. The displayed bounds are simplified under the
intended large-data regime \(n\gg d\) and \(d\leq r\leq n\).

\section{Ablation on the Optimization Procedure}
\label{app:optimizer_ablation}

The main contribution of ELOS is the criterion
\[
\Psi_{\mathrm{A}}(\Delta)
=
\operatorname{tr}
\left\{
\left(U^\top \Delta U\right)^{-1}
\right\},
\]
rather than a particular optimization procedure. In the main numerical
experiments, this criterion is approximately minimized using the
IBOSS$+$ algorithm. To examine the extent to which the
empirical performance depends on this optimizer, we replace IBOSS$+$
with the first-order A-optimal design algorithm of
\citet{ahipasaoglu2015}, while keeping the ELOS objective, generated
datasets, and evaluation procedure unchanged.

\subsection{Experimental Setup}

We use the multivariate heavy-tailed covariate setting from the main
simulation study. For each observation, we generate \(b=15\)
non-intercept covariates from a multivariate \(t\) distribution with
five degrees of freedom and covariance parameter
\[
\Sigma
=
5I_b+5J_b.
\]
After adding an intercept, the full design dimension is \(d=16\).

The response is generated according to
\[
y_i
=
x_i^\top\beta
+
\tau
\sin\left(
\sum_{j=1}^{b}x_{ij}^2
\right)
+
\varepsilon_i,
\qquad
\varepsilon_i\sim N(1,1).
\]
We consider
\[
n\in\{10^5,10^6\},
\qquad
r\in\{100,1000\},
\qquad
\tau\in\{0,1,3,5\}.
\]
Each configuration is repeated 20 times.

The comparison is paired at the replication level. Within each
replication, both optimization procedures are applied to the same
generated design matrix and response vector. The singular value
decomposition
\[
X=U\Sigma V^\top
\]
is computed once and shared by both implementations. Consequently,
differences in predictive performance arise from the subset-search
procedure rather than from differences in the generated data or the
matrix decomposition.

\subsection{Optimization Procedures}

The first implementation uses the IBOSS$+$ exchange procedure employed
throughout the main experiments. The second implementation uses the
first-order A-optimal design procedure of
\citet{ahipasaoglu2015}. The first-order method operates on continuous
design weights satisfying
\[
0\leq w_i\leq \frac{1}{r},
\qquad
\sum_{i=1}^{n}w_i=1.
\]
After the first-order iterations terminate, the \(r\) observations with
the largest positive weights are retained as the selected subset.

The first-order stopping criterion is
\[
\frac{\omega_{\max}}{\omega_{\min}}-1
\leq 10^{-5},
\]
and the maximum iteration count is set to \(10r\). In the present
experiments, the formal convergence criterion was not reached within
this iteration budget for any configuration. The reported first-order
results should therefore be interpreted as the performance of the
iteration-limited implementation rather than as exact solutions of the
continuous design problem.

\begin{table*}[!t]
\centering
\footnotesize
\setlength{\tabcolsep}{2.5pt}
\renewcommand{\arraystretch}{0.9}
\caption{
Optimizer ablation under the multivariate $t_5$ covariate setting.
Relative error is reported as mean with standard deviation in
parentheses over 20 paired replications. The difference column is
defined as the first-order relative error - the IBOSS$+$ relative
error. Selection times are reported in seconds.
}
\label{tab:optimizer_ablation_t}
\begin{tabular}{rrrccccc}
\toprule
$n$
&
$r$
&
$\tau$
&
IBOSS$+$ RE (SD)
&
First-order RE (SD)
&
$\mathrm{Diff}_{\mathrm{FO}-\mathrm{IBOSS+}}$
&
IBOSS$+$ time
&
First-order time
\\
\midrule

$10^5$
& $100$
& $0$
& $\mathbf{0.0109\;(0.0041)}$
& $0.0145\;(0.0063)$
& $0.0036$
& $0.1446$
& $3.6906$
\\
$10^5$
& $1000$
& $0$
& $0.0029\;(0.0009)$
& $\mathbf{0.0029\;(0.0015)}$
& $-0.0000$
& $0.1548$
& $36.2053$
\\
$10^6$
& $100$
& $0$
& $0.0137\;(0.0166)$
& $\mathbf{0.0084\;(0.0046)}$
& $-0.0053$
& $1.2365$
& $34.4965$
\\
$10^6$
& $1000$
& $0$
& $\mathbf{0.0011\;(0.0008)}$
& $0.0014\;(0.0011)$
& $0.0003$
& $1.3268$
& $439.9460$
\\

\midrule

$10^5$
& $100$
& $1$
& $\mathbf{0.0110\;(0.0035)}$
& $0.0194\;(0.0111)$
& $0.0083$
& $0.1474$
& $3.7510$
\\
$10^5$
& $1000$
& $1$
& $\mathbf{0.0027\;(0.0010)}$
& $0.0031\;(0.0019)$
& $0.0003$
& $0.1492$
& $36.9276$
\\
$10^6$
& $100$
& $1$
& $0.0146\;(0.0165)$
& $\mathbf{0.0126\;(0.0091)}$
& $-0.0021$
& $1.3102$
& $35.7365$
\\
$10^6$
& $1000$
& $1$
& $\mathbf{0.0011\;(0.0008)}$
& $0.0015\;(0.0011)$
& $0.0004$
& $1.3104$
& $395.4946$
\\

\midrule

$10^5$
& $100$
& $3$
& $\mathbf{0.0121\;(0.0050)}$
& $0.0195\;(0.0132)$
& $0.0074$
& $0.1450$
& $3.7687$
\\
$10^5$
& $1000$
& $3$
& $\mathbf{0.0023\;(0.0005)}$
& $0.0026\;(0.0011)$
& $0.0003$
& $0.1495$
& $35.9457$
\\
$10^6$
& $100$
& $3$
& $0.0148\;(0.0105)$
& $\mathbf{0.0136\;(0.0096)}$
& $-0.0012$
& $1.2994$
& $35.5435$
\\
$10^6$
& $1000$
& $3$
& $\mathbf{0.0009\;(0.0005)}$
& $0.0014\;(0.0014)$
& $0.0005$
& $1.3595$
& $367.2897$
\\

\midrule

$10^5$
& $100$
& $5$
& $\mathbf{0.0125\;(0.0057)}$
& $0.0186\;(0.0128)$
& $0.0061$
& $0.1434$
& $3.7502$
\\
$10^5$
& $1000$
& $5$
& $\mathbf{0.0022\;(0.0003)}$
& $0.0024\;(0.0007)$
& $0.0002$
& $0.1388$
& $33.2305$
\\
$10^6$
& $100$
& $5$
& $0.0147\;(0.0121)$
& $\mathbf{0.0131\;(0.0092)}$
& $-0.0016$
& $1.2998$
& $35.3817$
\\
$10^6$
& $1000$
& $5$
& $\mathbf{0.0008\;(0.0005)}$
& $0.0013\;(0.0014)$
& $0.0006$
& $1.4675$
& $600.7963$
\\

\bottomrule
\end{tabular}%
\end{table*}

Table~\ref{tab:optimizer_ablation_t} compares two implementations of
the same ELOS criterion. The relative-error results are broadly
comparable, although the difference varies with the measurement budget.
When $r=1000$, the absolute paired differences are no larger than
$0.0006$ across all values of $n$ and $\tau$. Thus, under the larger
measurement budget, replacing IBOSS$+$ with the first-order procedure
has little effect on predictive performance.

When $r=100$, the differences are somewhat larger and their direction
depends on the candidate-pool size. For $n=10^5$, IBOSS$+$ attains the
smaller mean relative error, whereas for $n=10^6$, the first-order
implementation attains a slightly smaller mean relative error. Given
that the comparison is based on five paired replications and that the
standard deviations are non-negligible, these differences do not
provide evidence that either optimizer is uniformly superior.

The computational difference is much more pronounced. For $r=100$, the
first-order procedure requires approximately 26--28 times the selection
time of IBOSS$+$. For $r=1000$, the difference increases substantially,
with first-order selection times ranging from approximately 33 to 601
seconds, compared with approximately 0.14 to 1.47 seconds for IBOSS$+$.
Moreover, the formal convergence criterion of the first-order procedure
was not reached within the iteration budget in any of the examined
configurations.

Overall, the ablation shows that the empirical behavior of ELOS is not
uniquely tied to the IBOSS$+$ exchange procedure. A substantially
different optimizer applied to the same criterion produces similar
predictive errors, particularly when $r=1000$, while IBOSS$+$ is far
more computationally efficient. These results support the use of
IBOSS$+$ as the practical optimizer in the large-scale experiments and
clarify the distinction between the ELOS criterion and the numerical
procedure used to optimize it.

\begin{table*}[!t]
\centering
\footnotesize
\setlength{\tabcolsep}{2.5pt}
\renewcommand{\arraystretch}{0.9}
\caption{
Optimizer ablation under the multivariate normal covariate setting.
Relative error is reported as mean with standard deviation in
parentheses over 20 paired replications.  The difference column is
defined as the sequential relative error minus the IBOSS$+$ relative
error. Selection times are reported in seconds and include the
random permutation, matrix reordering, sequential selection, and
index mapping for the sequential procedure.
}
\label{tab:optimizer_ablation_normal_seq}
\begin{tabular}{rrrccccc}
\toprule
$n$
&
$r$
&
$\tau$
&
IBOSS$+$ RE (SD)
&
Sequential RE (SD)
&
$\mathrm{Diff}_{\mathrm{SEQ}-\mathrm{IBOSS+}}$
&
IBOSS$+$ time
&
Sequential time
\\
\midrule
$10^5$
& $100$
& $0$
& $\mathbf{0.0614\;(0.0192)}$
& $0.1642\;(0.0543)$
& $0.1029$
& $0.1013$
& $0.2950$
\\
$10^5$
& $1000$
& $0$
& $\mathbf{0.0084\;(0.0026)}$
& $0.0150\;(0.0051)$
& $0.0066$
& $0.1135$
& $0.2924$
\\
$10^5$
& $10000$
& $0$
& $\mathbf{0.0007\;(0.0002)}$
& $0.0014\;(0.0006)$
& $0.0007$
& $0.1421$
& $0.3287$
\\
$10^6$
& $100$
& $0$
& $\mathbf{0.0686\;(0.0330)}$
& $0.1700\;(0.0638)$
& $0.1015$
& $1.1999$
& $1.9870$
\\
$10^6$
& $1000$
& $0$
& $\mathbf{0.0076\;(0.0027)}$
& $0.0170\;(0.0057)$
& $0.0094$
& $1.1919$
& $25.8641$
\\
$10^6$
& $10000$
& $0$
& $\mathbf{0.0007\;(0.0002)}$
& $0.0013\;(0.0004)$
& $0.0006$
& $1.3424$
& $2.0726$
\\
\addlinespace
$10^5$
& $100$
& $1$
& $\mathbf{0.0700\;(0.0223)}$
& $0.1516\;(0.0350)$
& $0.0816$
& $0.1057$
& $0.2945$
\\
$10^5$
& $1000$
& $1$
& $\mathbf{0.0092\;(0.0028)}$
& $0.0139\;(0.0048)$
& $0.0047$
& $0.1156$
& $0.2927$
\\
$10^5$
& $10000$
& $1$
& $\mathbf{0.0008\;(0.0003)}$
& $0.0015\;(0.0006)$
& $0.0007$
& $0.1512$
& $0.3315$
\\
$10^6$
& $100$
& $1$
& $\mathbf{0.0680\;(0.0251)}$
& $0.1880\;(0.0668)$
& $0.1199$
& $1.1971$
& $1.9471$
\\
$10^6$
& $1000$
& $1$
& $\mathbf{0.0070\;(0.0024)}$
& $0.0169\;(0.0068)$
& $0.0099$
& $1.2511$
& $1.9601$
\\
$10^6$
& $10000$
& $1$
& $\mathbf{0.0008\;(0.0003)}$
& $0.0015\;(0.0005)$
& $0.0007$
& $1.2515$
& $1.9939$
\\
\addlinespace
$10^5$
& $100$
& $3$
& $\mathbf{0.0732\;(0.0297)}$
& $0.1467\;(0.0404)$
& $0.0735$
& $0.1043$
& $0.2941$
\\
$10^5$
& $1000$
& $3$
& $\mathbf{0.0091\;(0.0030)}$
& $0.0137\;(0.0038)$
& $0.0046$
& $0.1059$
& $0.2951$
\\
$10^5$
& $10000$
& $3$
& $\mathbf{0.0009\;(0.0003)}$
& $0.0014\;(0.0006)$
& $0.0006$
& $0.1466$
& $0.3462$
\\
$10^6$
& $100$
& $3$
& $\mathbf{0.0652\;(0.0240)}$
& $0.1806\;(0.0692)$
& $0.1154$
& $1.2167$
& $1.9600$
\\
$10^6$
& $1000$
& $3$
& $\mathbf{0.0073\;(0.0027)}$
& $0.0161\;(0.0056)$
& $0.0088$
& $1.1734$
& $1.9612$
\\
$10^6$
& $10000$
& $3$
& $\mathbf{0.0009\;(0.0003)}$
& $0.0015\;(0.0005)$
& $0.0007$
& $1.2201$
& $2.0178$
\\
\addlinespace
$10^5$
& $100$
& $5$
& $\mathbf{0.0727\;(0.0305)}$
& $0.1474\;(0.0443)$
& $0.0748$
& $0.0951$
& $0.2959$
\\
$10^5$
& $1000$
& $5$
& $\mathbf{0.0090\;(0.0032)}$
& $0.0139\;(0.0039)$
& $0.0049$
& $0.1103$
& $0.2952$
\\
$10^5$
& $10000$
& $5$
& $\mathbf{0.0009\;(0.0004)}$
& $0.0014\;(0.0006)$
& $0.0005$
& $0.1487$
& $0.3562$
\\
$10^6$
& $100$
& $5$
& $\mathbf{0.0643\;(0.0251)}$
& $0.1744\;(0.0674)$
& $0.1101$
& $1.1858$
& $1.9519$
\\
$10^6$
& $1000$
& $5$
& $\mathbf{0.0075\;(0.0029)}$
& $0.0159\;(0.0051)$
& $0.0084$
& $4.7720$
& $1.9958$
\\
$10^6$
& $10000$
& $5$
& $\mathbf{0.0009\;(0.0003)}$
& $0.0015\;(0.0005)$
& $0.0007$
& $1.2765$
& $2.0444$
\\
\bottomrule
\end{tabular}%
\end{table*}

\subsection{Optimizer Ablation on the Sensor Data}
\label{app:sensor_optimizer_ablation}

We further compare the IBOSS+ exchange optimizer used in ELOS with the first-order optimizer on the Sensor data. We use the same residual-bootstrap setting as in the main experiment, with 50 bootstrap replications and subset sizes \(r\in\{1000,5000,10000\}\). Within each bootstrap replication, the two methods share the same singular value decomposition and are evaluated using the same bootstrap response. The first-order optimizer is run with a tolerance of \(10^{-5}\) and a maximum of 200 iterations.

Table~\ref{tab:sensor_optimizer_ablation} reports the mean relative error (RE), its standard deviation, and the mean computation times. Both optimizers achieve decreasing prediction error as the subset size increases. Their errors are of the same order, particularly for the larger subset sizes. However, the IBOSS+ optimizer consistently yields lower mean RE and requires substantially less selection time.

\section{Diagnostic Experiments under Alternative Misspecification Terms}
\label{app:misspecification_diagnostic}

Theorem~3 establishes an average-case result over misspecification
directions. To examine whether the numerical performance of ELOS depends
on the particular sinusoidal misspecification term used in the main
simulation study, we conduct an additional diagnostic experiment using
several structurally different nonlinear departures.

We retain the multivariate $t$-distribution setting from the main
simulation study. Specifically, the covariates are generated from a
multivariate $t$ distribution with five degrees of freedom, the predictor
dimension is $d=15$, the noise standard deviation is $\sigma=1$, and the
misspecification magnitude is fixed at $\tau=3$. We consider
\[
n \in \{10^5,10^6\},
\qquad
r \in \{100,1000\}.
\]
Each configuration is repeated 20 times. All random seeds and
data-generation settings are held fixed across the different
misspecification terms, so the only change is the form of the nonlinear
departure.

The response is generated according to
\[
y_i = x_i^\top \beta + f(x_i) + \varepsilon_i,
\qquad
\varepsilon_i \sim N(0,\sigma^2).
\]

We consider the following four misspecification terms.

\paragraph{Sinusoidal departure.}
The first term is the misspecification function used in the main
simulation study:
\[
f_{\mathrm{sin}}(x_i)
=
\tau
\sin\left(
\sum_{j=1}^{d} x_{ij}^2
\right).
\]
This is a smooth, bounded, radial, and oscillatory departure.

\paragraph{Monotone radial departure.}
To evaluate whether the results depend on the oscillatory structure of
the sine function, we consider
\[
f_{\mathrm{rad}}(x_i)
=
\tau
\left(
\frac{
\sum_{j=1}^{d} x_{ij}^2
}{
1+\sum_{j=1}^{d} x_{ij}^2
}
-\frac{1}{2}
\right).
\]
This function is smooth, bounded, and radial, but it is monotone in the
squared covariate norm.

\paragraph{Interaction departure.}
To consider a non-radial departure caused by an omitted interaction, we
use
\[
f_{\mathrm{int}}(x_i)
=
\tau \tanh(x_{i1}x_{i2}).
\]
This function is smooth and bounded, but it depends on the interaction
between two predictors rather than on the overall covariate magnitude.

\paragraph{Threshold departure.}
Finally, we consider the localized and nonsmooth departure
\[
f_{\mathrm{thr}}(x_i)
=
\tau \mathbf{1}\{x_{i1}>1\}.
\]
This term introduces a discontinuous change in the conditional mean over
a restricted region of the covariate space.

\begin{table*}[!t]
\centering
\footnotesize
\setlength{\tabcolsep}{2.5pt}
\renewcommand{\arraystretch}{0.9}
\caption{Optimizer ablation on the Sensor data over 50 residual-bootstrap replications. Selection and total times are reported in seconds.}
\label{tab:sensor_optimizer_ablation}
\begin{tabular}{rllrrrr}
\toprule
\(r\) & Optimizer & Mean RE & SD RE & Mean SVD time & Mean selection time & Mean total time \\
\midrule
1000  & IBOSS+      & 0.008077 & 0.003542 & 0.984 & 6.734  & 7.719  \\
1000  & First-order & 0.023971 & 0.013585 & 0.984 & 47.901 & 48.886 \\
5000  & IBOSS+      & 0.001815 & 0.000843 & 0.984 & 6.708  & 7.693  \\
5000  & First-order & 0.004189 & 0.003297 & 0.984 & 46.336 & 47.321 \\
10000 & IBOSS+      & 0.000926 & 0.000413 & 0.984 & 8.919  & 9.903  \\
10000 & First-order & 0.001856 & 0.001041 & 0.984 & 45.501 & 46.485 \\
\bottomrule
\end{tabular}
\end{table*}

\begin{table*}[!t]
\centering
\footnotesize
\setlength{\tabcolsep}{2.5pt}
\renewcommand{\arraystretch}{0.9}
\caption{
Comparison of IBOSS$+$ with the adaptive fixed-target sequential
procedure on the Sensor data over 50 paired 
replications. Both methods use the same bootstrap response and target
subset size $r$. Relative error is reported as the mean and standard
deviation. Selection times are reported in seconds. For the sequential
procedure, the reported time includes random permutation, matrix
reordering, sequential selection, and index mapping. 
}
\label{tab:sensor_sequential_ablation}

\begin{tabular}{rllrr}
\toprule
$r$
&
Method
&
Mean RE
&
SD RE
&
Mean selection time
\\
\midrule
1000
& IBOSS$+$
& 0.007926
& 0.003394
& 5.482
\\
1000
& Sequential
& 0.015530
& 0.005936
& 7.753
\\
5000
& IBOSS$+$
& 0.001806
& 0.000861
& 5.267
\\
5000
& Sequential
& 0.003087
& 0.000964
& 7.721
\\
10000
& IBOSS$+$
& 0.000900
& 0.000393
& 5.316
\\
10000
& Sequential
& 0.001482
& 0.000577
& 7.785
\\
\bottomrule
\end{tabular}
\end{table*}

\begin{table*}[!t]
\centering
\small
\setlength{\tabcolsep}{2.5pt}
\renewcommand{\arraystretch}{0.9}
\caption{
Mean relative error difference under alternative misspecification terms
with $\tau=3$. Each result is averaged over 20 repetitions. Smaller
values are better, and the better-performing method within each setting
is shown in bold.
}
\label{tab:misspecification_diagnostic}
\begin{tabular}{cc cc cc cc cc}
\toprule
\multirow{2}{*}{$n$}
&
\multirow{2}{*}{$r$}
&
\multicolumn{2}{c}{Sinusoidal}
&
\multicolumn{2}{c}{Monotone radial}
&
\multicolumn{2}{c}{Interaction}
&
\multicolumn{2}{c}{Threshold}
\\
\cmidrule(lr){3-4}
\cmidrule(lr){5-6}
\cmidrule(lr){7-8}
\cmidrule(lr){9-10}
&
&
Coreset
&
ELOS
&
Coreset
&
ELOS
&
Coreset
&
ELOS
&
Coreset
&
ELOS
\\
\midrule
$10^5$
&
$100$
&
$0.1110$
&
$\mathbf{0.0116}$
&
$0.1365$
&
$\mathbf{0.0196}$
&
$0.1402$
&
$\mathbf{0.0186}$
&
$\mathbf{0.1379}$
&
$0.3850$
\\
$10^5$
&
$1000$
&
$0.0099$
&
$\mathbf{0.0025}$
&
$0.0078$
&
$\mathbf{0.0032}$
&
$0.0118$
&
$\mathbf{0.0026}$
&
$\mathbf{0.0850}$
&
$0.2711$
\\
$10^6$
&
$100$
&
$0.1200$
&
$\mathbf{0.0090}$
&
$0.1269$
&
$\mathbf{0.0122}$
&
$0.1468$
&
$\mathbf{0.0177}$
&
$\mathbf{0.1628}$
&
$0.4563$
\\
$10^6$
&
$1000$
&
$0.0092$
&
$\mathbf{0.0012}$
&
$0.0084$
&
$\mathbf{0.0021}$
&
$0.0106$
&
$\mathbf{0.0018}$
&
$\mathbf{0.0788}$
&
$0.3869$
\\
\bottomrule
\end{tabular}%
\end{table*}

Table~\ref{tab:misspecification_diagnostic} shows that the performance of
ELOS is not specific to the sinusoidal function used in the main
simulation study. Under the sinusoidal, monotone radial, and interaction
departures, ELOS achieves a lower mean relative error than Coreset for
every combination of $n$ and $r$. Thus, the advantage persists when the
oscillatory sinusoidal departure is replaced by either a monotone radial
function or a non-radial interaction.

The threshold departure yields a different pattern. Under this localized
and discontinuous misspecification, Coreset outperforms ELOS in all four
configurations. A possible explanation is that Coreset uses the response
during subset construction and can therefore adapt to the abrupt change
in the conditional mean. In contrast, ELOS is response-free and selects
observations using only the covariates, without knowledge of the form or
location of the misspecification.

These results provide a more nuanced assessment of the scope of ELOS.
Its strong performance is not an artifact of the original sinusoidal
term, because the advantage remains stable under several distinct smooth
nonlinear departures. At the same time, the threshold example
demonstrates that ELOS is not uniformly optimal under every structured
misspecification. This observation is consistent with the average-case
nature of Theorem~3, which does not provide a minimax or distributionally
robust guarantee for every fixed misspecification direction.
}

\newpage

\end{document}